\newtheorem{proposition}{Proposition}
\newcolumntype{C}{>{\centering\arraybackslash}X} 
\newtheorem{thm}{Theorem}
\newtheorem{rem}{Remark}
\newtheorem{lem}{Lemma}
\newtheorem{proof}{proof}
\begin{document}
\linespread{1.38}

\title{IRS Aided MEC Systems with Binary Offloading: A Unified Framework for Dynamic IRS Beamforming}

\author{Guangji Chen, and Qingqing Wu
   \thanks{G. Chen and Q. Wu are with the State Key Laboratory of Internet of Things for Smart City, University of Macau, Macao 999078, China (email: guangjichen@um.edu.mo; qingqingwu@um.edu.mo).}
         }


\maketitle

\begin{abstract}
In this paper, we develop a unified dynamic  intelligent reflecting surface (IRS) beamforming framework to boost the sum computation rate of an IRS-aided mobile edge computing (MEC) system, where each device follows a binary offloading policy. Specifically, the task of each device has to be either executed locally or offloaded to MEC servers as a whole with the aid of given number of IRS beamforming vectors available. By flexibly controlling the number of IRS reconfiguration times, the system can achieve a balance between the performance and associated signalling overhead. We aim to maximize the sum computation rate by jointly optimizing the computational mode selection for each device, offloading time allocation, and IRS beamforming vectors across time. Since the resulting optimization problem is non-convex and NP-hard, there are generally no standard methods to solve it optimally. To tackle this problem, we first propose a penalty-based successive convex approximation algorithm, where all the associated variables in the inner-layer iterations are optimized simultaneously and the obtained solution is guaranteed to be locally optimal. Then, we further derive the offloading activation condition for each device by deeply exploiting the intrinsic structure of the original optimization problem. According to the offloading activation condition, a low-complexity algorithm based on the successive refinement method is proposed to obtain high-quality solutions, which is more appealing for practical systems with a large number of devices and IRS elements. Moreover, the optimal condition for the proposed low-complexity algorithm is revealed. Numerical results demonstrate the effectiveness of our proposed algorithms. Besides, the results illustrate the practical significance of the IRS in MEC systems for achieving coverage extension and supporting multiple energy-limited devices for task offloading, and also unveil the fundamental performance-cost tradeoff of the proposed dynamic IRS beamforming framework.

\end{abstract}

\begin{IEEEkeywords}
Intelligent reflecting surface (IRS), mobile edge computing, dynamic beamforming, resource allocation, binary offloading, computation rate.
\end{IEEEkeywords}


\IEEEpeerreviewmaketitle

\section{Introduction}
With the rapid advancement of Internet of Things (IoT) in recent years, a variety of computation-intensive and latency-sensitive applications, such as autonomous driving, augmented reality, virtual reality, and unmanned aerial vehicles, e.g., \cite{wu2018capacity, chen2019learning, meng2022uav} have emerged to provide real-time machine-to-human and machine-to-machine interactions. The success of these new applications rely on the next generation of wireless networks to accommodate massive IoT devices for real-time computation, communication, and control. Due to the production cost consideration and stringent device size constraint, an IoT device is generally equipped with a capacity-limited battery and a low-performance processor. Thus, how to enhance computational capability for IoT devices to handle intensive computation loads with stringent latency requirements becomes a critical challenge in future IoT networks\cite{cao2018joint}. Conventionally, the cloud computing technique is essential to be exploited for providing IoT devices with abundant computational resources \cite{barbarossa2014communicating}. However, cloud computing may incur long computational latency since cloud servers are generally placed in the remote network core.

To resolve the aforementioned issue, mobile edge computing (MEC) has been proposed as an alternative technology, which bridges the gap between the IoT devices and the cloud by deploying servers at the network edge (e.g., cellular base stations (BSs) or WiFi access points (APs)) \cite{mach2017mobile}. With the help of MEC, IoT devices are allowed to directly offload their data set of a task to and then receive the computational results from the MEC server, which significantly reduces the computational latency compared to cloud computing. Regarding the practical implementation of MEC, computational tasks of devices may be classified into different categories, which depends on the tasks' dependence and partitionability. Accordingly, there are two typical computation-offloading modes in MEC systems, namely partial offloading and binary offloading \cite{8016573}. Specifically, partial offloading allows a task to be flexibly partitioned into two parts with one offloaded to edge servers and the other executed locally. In contrast, for the binary offloading mode, a task is required to be either executed at the local device or offloaded to edge servers. Compared to partial offloading, binary offloading is easier to be implemented and can be applicable for scenarios where tasks  are not partitionable. For both the two typical computation-offloading modes, the joint optimization of computation and communication resource allocations have attracted a lot of research interests in terms of energy consumption minimization \cite{9140412, 8537962}, computation latency minimization \cite{9179779, zhu2020resource}, computation rate maximization \cite{8304010, 8334188, 8434285}, and energy efficiency maximization \cite{8986845, 9312671}. However, the efficiency of task offloading may be severely degraded by the wireless channel attenuation between the AP and devices, which thus fundamentally locks the full potential of MEC. Aiming to tackle this issue, the massive multiple-input multiple-output (MIMO) technique was employed for improving the efficiency of task offloading in MEC systems \cite{wang2020energy}.

Although exploiting the massive MIMO technique is in capable of considerably improving the offloading efficiency of MEC systems with the aid of huge beamforming gain, the associated high energy consumption and hardware cost are still obstacles in the way of its practical implementation. As a remedy, intelligent reflecting surface (IRS) has been recently proposed as a cost-effective technique to boost both the spectral-efficiency and energy-efficiency for future sixth-generation (6G) wireless networks \cite{8910627, 9326394}. Without requiring any  transmit radio-frequency (RF) chains, an IRS is a digitally-controlled meta-surface, which consists of an IRS controller and a large number of low-cost passive reflecting elements. By smartly adjusting the phase shifts of each IRS element via the IRS controller, the IRS is capable of proactively modifying the wireless propagation environment in real-time to realize different design objectiveness, e.g., signal enhancement or interference mitigation. In particular, the initial technique works \cite{8811733, wu2019beamforming} theoretically unveiled that the received power at the user is capable of scaling quadratically with the number of reflecting elements. Motivated by such a promising power scaling law, IRS has been extensively studied in the recent literature under various applications for different design objectives, such as non-orthogonal multiple-access (NOMA) \cite{Fu2021reconfigurable,9139273}, multi-cell networks \cite{9279253}, massive MIMO \cite{zhi2020power}, physical-layer security \cite{guan2020intelligent, hu2021robust}, and wireless-powered communications networks \cite{wu2021irs, 9214497, 9298890}, among others. In addition to the aforementioned applications, the IRS is also appealing for improving the task offloading efficiency of the MEC network by exploiting its high passive beamforming gain. Specifically, by properly deploying IRSs in the vicinity of IoT devices, severe distance-based signal attenuation can be effectively compensated, thus leading to a largely extended service coverage of MEC. This is of crucial importance for unlocking the full potential of the MEC in achieving high computational capability for future IoT networks.

To reap the above benefits, there are only a handful of works paying attention to investigating IRS-aided MEC systems \cite{wang2021task, chen2021irs, 9133107, 9270605, hu2021reconfigurable, zhou2020delay}. For example, a novel massive MIMO and IRS-enabled task offloading framework was proposed in \cite{wang2021task}, where the IRS beamforming, power allocation, and task offloading were jointly optimized to minimize the total energy consumption. The results in these works showed that with joint optimization of communication/computation resource allocations and IRS beamforming, the computation performance can be significantly improved. However, there are still some critical issues remain unsolved in IRS-aided MEC systems. First, all of the above works only consider the number of IRS beamforming vectors is equal to the number of devices or the IRS beamforming vector remains static in a transmission frame. Note that although employing more IRS beamforming vectors offers more degrees of freedoms (DoFs) to further enhance the system performance, it also incurs more feedback signalling overhead. This is because that the AP is in charge of the algorithmic implementation generally and then feeds the optimized IRS beamforming vectors back to the IRS controller for reconfiguring reflections due to the limited computing capability of the IRS. As such, a more flexible dynamic IRS beamforming framework is desired for striking the balance between the performance and signaling overhead. Second, all the aforementioned works focused on the partial offloading mode. Compared to partial offloading, binary offloading is easier to be practically implemented and is more appealing for some specific IoT networks where computational tasks are not partitionable. However, the efficient design of IRS-aided MEC systems adopting the binary offloading policy is currently lacking of studying.

Motivated by the above, we study in this paper an IRS-aided MEC system by considering the fundamental tradeoff between the system performance and signalling overhead incurred by dynamic IRS beamforming. Specifically, with any given number of IRS beamforming vectors available, an IRS is deployed to assist the time-division multiple access (TDMA)-based task offloading from multiple energy-constrained devices to the AP. Each device follows the binary offloading policy. To the best of the authors' knowledge, it is the first work to study the IRS-aided MEC adopting binary offloading by considering the performance-cost tradeoff. Mathematically speaking, the optimization problem under the binary offloading policy involves the hard combinatorial mode selection and thus is more challenging to be solved than that of the partial offloading. Moreover, different from the MEC system without IRS, favourable time-varying channels can be effectively generated by designing IRS beamforming vectors over different time slots (TSs). This unique DoF provided by the IRS enhances the multiuser diversity and also enables a flexible offloading design to be carried out. Note that the IRS beamforing affects the offloading decisions among devices, which makes the computational mode selection and IRS beamforming design highly coupled. The main contributions of this paper are summarized as follows:
\begin{itemize}
  \item We propose a unified dynamic IRS beamforming framework for assisting task offloading in an MEC system with binary offloading, where the IRS is allowed to adjust its passive beamforming by an arbitrary number of times. By controlling the number of IRS reconfiguration times, the system is capable of flexibly balancing the performance-cost tradeoff. This unified dynamic IRS beamforming framework generalizes the IRS beamforming configurations in previous works \cite{wang2021task, chen2021irs, 9133107, 9270605, hu2021reconfigurable, zhou2020delay} as special cases. Under the framework, our objective is to maximize the sum computation rate by jointly optimizing the computational mode selection for each device, IRS beamforming vectors, and time allocation.
  \item Due to the highly coupled optimization variables and the combinatorial nature of the multiple devices' computational mode selection, the formulated optimization problem is non-convex and NP-hard \cite{8537962}. There are generally no standard methods for solving such a problem optimally. To overcome this issue, we propose a novel penalty-based succussive convex approximation (SCA) algorithm, which includes two-layer iteration. In the inner layer, the penalized optimization problem is solved by exploiting SCA techniques, where all the variables are optimized simultaneously. While in the outer layer, the penalty factor over iterations is updated to guarantee convergence.
  \item To further reduce the computational complexity, we propose a low-complexity algorithm based on the successive refinement method. Specifically, we derive the offloading activation condition for each device by exploiting the intrinsic structure of the original problem. Utilizing the offloading activation condition, we first consider the ideal case with an asymptotically large number of IRS beamforming vectors available and develop a low computational complexity algorithm for the offloading device selection. The optimality condition of the proposed algorithm for the ideal case is revealed, which is of vital importance to characterize system performance upper bounds and provide useful guidelines for practical systems design. Inspired by the obtained solutions for the ideal case, we then reconstruct the high-quality solutions for the general case with finite number of IRS reconfiguration times. Its associated low computational complexity makes it more appealing for practical IoT networks with massive devices.
   \item Numerical results demonstrate that both the proposed two algorithms are capable of approaching the upper bound characterized by the ideal case. The maximum number of required IRS beamforming vectors for achieving the maximum computation rate performance is also revealed, which suggests the fundamental performance-cost tradeoff in exploiting the dynamic IRS beamforming. In addition, the results validate that the practical superiorities of the IRS-aided MEC over the conventional MEC without IRS in terms of supporting multiple energy-constrained devices for offloading and extending the coverage range.
\end{itemize}

The rest of this paper is organized as follows. Section II introduces the system model and problem formulation for the considered IRS-aided MEC. In Section III, we propose a penalty-based SCA algorithm. In Section IV, we propose a low-complexity algorithm based on the successive refinement method. Numerical results are presented in Section V and the paper is concluded in Section VI.

\emph{Notations:} Boldface upper-case and lower-case  letter denote matrix and   vector, respectively.  ${\mathbb C}^ {d_1\times d_2}$ stands for the set of  complex $d_1\times d_2$  matrices. For a complex-valued vector $\bf x$, ${\left\| {\bf x} \right\|}$ represents the  Euclidean norm of $\bf x$, ${\rm arg}({\bf x})$ denotes  the phase of   $\bf x$, and ${\rm diag}(\bf x) $ denotes a diagonal matrix whose main diagonal elements are extracted from vector $\bf x$.
For a vector $\bf x$, ${\bf x}^*$ and  ${\bf x}^H$  stand for  its conjugate and  conjugate transpose respectively.   For a square matrix $\bf X$,  ${\rm{Tr}}\left( {\bf{X}} \right)$, $\left\| {\bf{X}} \right\|_2$ and ${\rm{rank}}\left( {\bf{X}} \right)$ respectively  stand for  its trace, Euclidean norm and rank,  while ${\bf{X}} \succeq {\bf{0}}$ indicates that matrix $\bf X$ is positive semi-definite.
A circularly symmetric complex Gaussian random variable $x$ with mean $ \mu$ and variance  $ \sigma^2$ is denoted by ${x} \sim {\cal CN}\left( {{{\mu }},{{\sigma^2 }}} \right)$.  ${\cal O}\left(  \cdot  \right)$ is the big-O computational complexity notation.


\vspace{-8pt}
\section{System Models and Problem Formulations}
\vspace{-8pt}
\subsection{Network Model}
As shown in Fig. 1, we consider an IRS-aided MEC system, where an AP, associated with an MEC server, aims to provide edge computing services for $K$ energy-constrained devices, denoted by the set ${\cal K} \buildrel \Delta \over = \left\{ {1, \ldots ,K} \right\}$. It is assumed that the AP and all the devices are equipped with a single antenna. To overcome the high path-loss attenuation between the devices and AP, an IRS with $N$ elements, denoted by the set, ${\cal N} \buildrel \Delta \over = \left\{ {1, \ldots ,N} \right\}$, is deployed for assisting task offloading from devices to the AP. In practice, the IRS is generally implemented with low energy consumption and low cost. Thus, its computational capability is limited, which may not be able to periodically execute the algorithm. We assume that the algorithm is executed at the AP and then the optimized IRS beamforming vectors are sent from the AP to the IRS controller through a wirelessly control link. As such, the IRS is capable of dynamically adjusting the phase shift of each element over time via the IRS controller.

For the ease of practical implementation, all devices and the AP are assumed to operate over the same frequency band. Considering the quasi-static flat-fading model for all channels, the channel coefficients remain constant within the channel coherence block but may vary from one to another block. We focus on one particular frame with time duration $T$, which is chosen to be no larger than the channel coherence time and the latency of the MEC application. Within each frame, the channels from the AP to device $k$, from the AP to the IRS, and from the IRS to device $k$ are denoted by ${\bf{g}} \in \mathbb{C}^{N\times 1}$, ${{\bf{h}}_{r,k}} \in \mathbb{C}^{N \times 1}$, and ${{h}_{d,k}} \in \mathbb{C}$, respectively. The channel state information (CSI) is assumed to be perfectly known at the AP based on the channel acquisition methods discussed in \cite{9326394}, which facilitates us to characterize the performance upper bound of our considered IRS-aided MEC systems.

\begin{figure}
\begin{minipage}[t]{0.45\linewidth}
\centering
\includegraphics[width=3in]{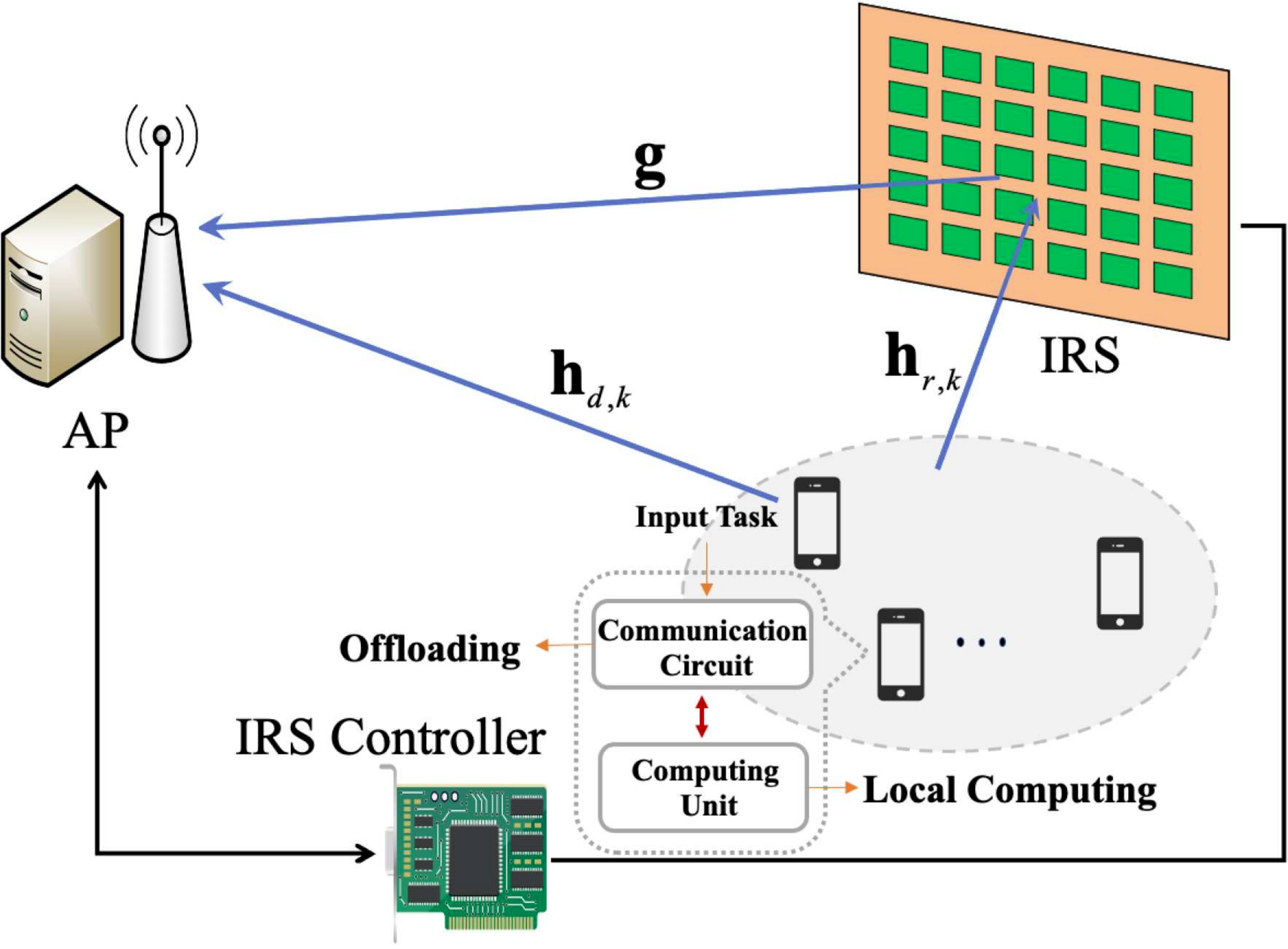}
\caption{An IRS-aided MEC system with binary offloading.}
\label{model}
\end{minipage}%
\hfill
\begin{minipage}[t]{0.45\linewidth}
\centering
\includegraphics[width=3.6in]{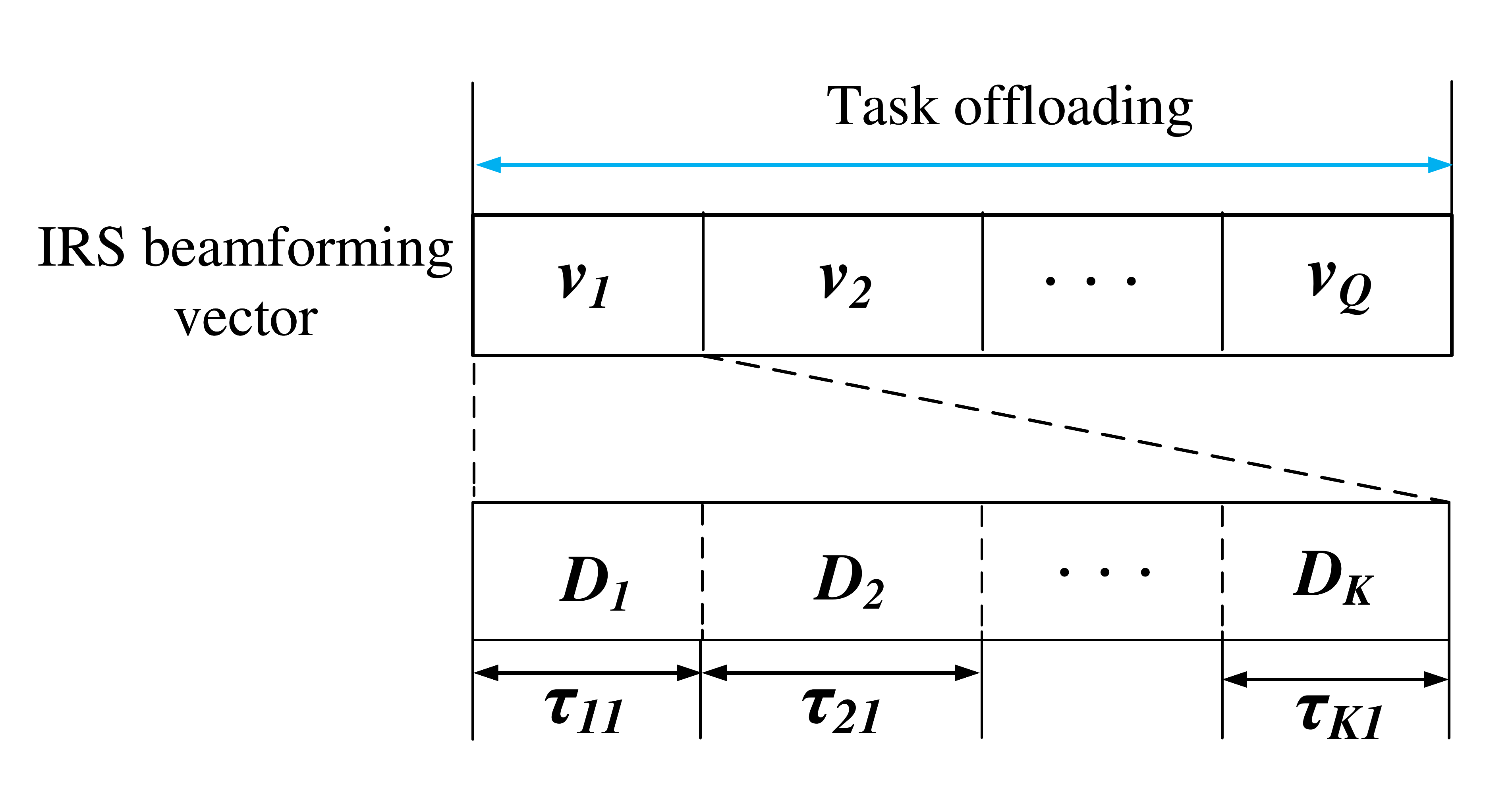}
\caption{A unified dynamic IRS beamforming framework}
\label{DIBF}
\end{minipage}\hspace{9mm}
\vspace{-16pt}
\end{figure}
\subsection{Computation and Offloading Model}
In this paper, we assume that each device adopts the binary computation-offloading policy. Specifically, the computational task must be either computed at the local device or offloaded to the MEC server, which corresponds to a wide variety of applications. For example, the measurement samples at a sensor are generally correlated in time and thereby require to be jointly processed for enhancing the estimation accuracy. We adopt two mutually exclusive sets, denoted by ${{\cal K}_{{\rm{loc}}}}$ and ${{\cal K}_{{\rm{off}}}}$, to represent the indices of the devices that operate in local computing mode and offloading mode, respectively.  It is assumed that each device has a limited energy budget dedicated for completing its own task bits, denoted as ${E_k}$ in Joules (J). Depending on the selected computation mode, the detailed operation of each device is illustrated as follows.
\subsubsection{Local Computing Mode}
When the device chooses to operate in local computing mode, it can compute throughout the entire frame of duration $T$. Let $0 \le {t_k} \le T$  and ${f_k}$ denote the computation time and the processor's operating frequency of device $k$, $k \in {{\cal K}_{{\rm{loc}}}}$, respectively, where ${f_k} \le {f_{\max }}$ holds due to the computation capability constraint. Then, the power consumption of device $k$ is ${\gamma _c}f_k^3$ (J per second), where ${\gamma _c}$ denotes the computational energy efficiency of the specific processor \cite{8304010, 8334188, 8434285}. Thus, the energy consumption of local computing is given by
\begin{align}\label{energy_loc}
E_k^{{\mathop{\rm loc}\nolimits} } = {\gamma _c}f_k^3{t_k} \le {E_k},k \in {{\cal K}_{{\rm{loc}}}},
\end{align}
Then, we use ${C_k}$ to represent the amount of required computation source for the computational tasks of device $k$, i.e., the number of CPU cycles for completing 1-bit input data. Thus, the computational bits locally executed at device $k$ in a considered frame is \cite{8304010, 8334188, 8434285}
\begin{align}\label{loc_computing}
R_k^{{\rm{loc}}} = \frac{{{t_k}{f_k}}}{{{C_k}}},k \in {{\cal K}_{{\rm{loc}}}},
\end{align}

\subsubsection{Offloading Mode}

For assisting task offloading from devices to the AP, we propose an overhead-aware dynamic IRS beamforming protocol as shown in Fig. \ref{DIBF}. Notice that the resource allocation algorithm is executed at the AP, but not at the IRS. Thus, feeding back the optimized phase shifts from the AP to the IRS before the task offloading phase, may introduce a non-negligible overhead, especially for the large number of elements. Without loss of generality, we assume that IRS beamforming vectors can be reconfigured $Q - 1$ times in total for task offloading, where $Q \ge 1$ can be any positive number, corresponding to $Q$ available IRS beamforming vectors. The $Q$ IRS beamforming matrices are denoted by ${{\bf{\Theta }}_q} = {\mathop{\rm diag}\nolimits} \left( {{e^{j{\theta _{q,1}}}}, \ldots ,{e^{j{\theta _{q,N}}}}} \right),\forall q \in {\cal Q} \buildrel \Delta \over = \left\{ {1, \ldots ,Q} \right\}$, where ${\theta _{q,n}} \in \left[ {0, 2\pi } \right), \forall n \in {\cal N}$. As such, it requires the AP to feedback $QN$ IRS phase shifts to the IRS controller, which scales linearly with $Q$. In practice, the value of $Q$ is dependent on the capacity of the control link between the AP and the IRS controller.

Regarding the task offloading, we adopt the TDMA protocol in order to avoid the interference among devices, where the orthogonal TSs are allocated to each individual device in a transmission frame. Denote ${\tau _k}$ as the time interval of the $k$-th TS, which is allocated to device $k$, $k \in {{\cal K}_{{\rm{off}}}}$. Furthermore, we assume that each device can employ any of the available $Q$ IRS beamforming vectors in order to fully unleash the potential of dynamic IRS beamforming for assisting its offloading. As such, the $k$-th TS is further partitioned into $Q$ sub-TSs. Let ${\tau _{k,q}}$ and ${p_{k,q}}$ denote the time and transmit power of device $k$ allocated to the $q$-th IRS beamforming vector, where ${\tau _k} = \sum\nolimits_{q = 1}^Q {{\tau _{k,q}}}$. Thus, the sum offloaded bits of device $k$ can be written as
\begin{align}\label{offloaded_bit}
R_k^{{\rm{off}}} &= B\sum\limits_{q = 1}^Q {{\tau _{k,q}}{{\log }_2}\left( {1 + \frac{{{p_{k,q}}{{\left| {{h_{d,k}} + {\bf{h}}_{r,k}^H{{\bf{\Theta }}_q}{\bf{g}}} \right|}^2}}}{{{\sigma ^2}}}} \right)}\nonumber\\
&= B\sum\limits_{q = 1}^Q {{\tau _{k,q}}{{\log }_2}\left( {1 + \frac{{{p_{k,q}}{{\left| {{h_{d,k}} + {\bf{q}}_k^H{{\bf{v}}_q}} \right|}^2}}}{{{\sigma ^2}}}} \right)} ,k \in {{\cal K}_{{\rm{off}}}},
\end{align}
where $B$ denotes the system bandwidth, ${{\sigma ^2}}$ is the additive Gaussian noise power at the AP, ${\bf{q}}_k^H = {\bf{h}}_{r,k}^H{\rm{diag}}\left( {\bf{g}} \right)$ and ${{\bf{v}}_q} = {\left[ {{e^{j{\theta _{q,1}}}}, \ldots ,{e^{j{\theta _{q,N}}}}} \right]^T}$.
\vspace{-10pt}
\subsection{Problem Formulation}
We aim to maximize the sum computational bits by jointly optimizing the computational mode selection of each device, the time allocation, the transmit powers, the CPU frequency, and the IRS beamforming vectors. The corresponding optimization problem can be written as
\begin{subequations}\label{C1}
\begin{align}
\label{C1-a}\mathop {\max }\limits_{\left\{ {{\tau _{k,q}}} \right\},\left\{ {{p_{k,q}}} \right\}, \left\{ {{{\bf{v}}_q}} \right\}, \left\{ {{t_k}} \right\}, \left\{ {{f_k}} \right\}, {{{\cal K}_{{\rm{off}}}}}}  \;\;&B\sum\limits_{k \in {{\cal K}_{{\rm{off}}}}} {\sum\limits_{q = 1}^Q {{\tau _{k,q}}{{\log }_2}\left( {1 + \frac{{{p_{k,q}}{{\left| {{h_{d,k}} + {\bf{q}}_k^H{{\bf{v}}_q}} \right|}^2}}}{{{\sigma ^2}}}} \right) + \sum\limits_{k \in {{\cal K}_{{\rm{loc}}}}} {\frac{{{t_k}{f_k}}}{{{C_k}}}} } } \\
\label{C1-b}{\rm{s.t.}}\;\;\;\;\;\;\;\;\;\;\;\;\;\;\;\;\;\;\;&\sum\limits_{q = 1}^Q {{\tau _{k,q}}{p_{k,q}}}  \le {E_k}, ~\forall {k} \in {{\cal K}_{{\rm{off}}}},\\
\label{C1-c}&{\gamma _c}f_k^3{t_k} \le {E_k}, ~\forall {k} \in {{\cal K}_{{\rm{loc}}}},\\
\label{C1-d}&\sum\limits_{k \in {{\cal K}_{{\rm{off}}}}} {\sum\limits_{q = 1}^Q {{\tau _{k,q}}} }  \le T,\\
\label{C1-e}&{\tau _{k,q}} \ge 0,{p_{k,q}} \ge 0, ~\forall {k} \in {{\cal K}_{{\rm{off}}}},~\forall q \in {\cal Q},\\
\label{C1-f}&0 \le {t_k} \le T, 0 \le {f_k} \le {f_{\max }}, ~\forall {k} \in {{\cal K}_{{\rm{loc}}}},\\
\label{C1-g}&\left| {{{\left[ {{{\bf{v}}_q}} \right]}_n}} \right| = 1, ~\forall n \in {\cal N},~\forall q \in {\cal Q},\\
\label{C1-h}&{{\cal K}_{{\rm{off}}}} \subseteq {\cal K},~{{\cal K}_{{\rm{loc}}}} = {\cal K}\backslash {{\cal K}_{{\rm{off}}}}.
\end{align}
\end{subequations}
In problem \eqref{C1}, constraints \eqref{C1-b} and \eqref{C1-c} ensure that the energy consumed at devices operated in either offloading or local computing mode cannot exceed their available energy. Constraints \eqref{C1-d} and \eqref{C1-e} represent the total offloading time constraint and the non-negativity constraints on the offloading power and time allocation variables, respectively. Constraint \eqref{C1-f} gives the range of tunable local computing frequencies and computing time for devices operated in local offloading mode, while constraint \eqref{C1-h} indicates the device sets of local computing and offloading modes.

Note that problem \eqref{C1} is generally a non-convex optimization problem with the combinatorial mode selection variable ${{\cal K}_{{\rm{off}}}}$ and the highly-coupled variables in both the objective function and constraints. In a conventional MEC system with binary offloading, the design of the computational mode selection is a non-trivial task since it is dependent on various practical factors, such as the channel conditions, the capability of local computing at devices, the computational complexity of tasks, and the amount of the available energy at devices. For our considered IRS aided MEC systems, the wireless channel can be proactively controlled by adjusting the reflection coefficients of the IRS, which renders the optimal mode selection is highly interacted with the optimal IRS beamforming vectors. Moreover, it is worth noting that even with a given mode selection, problem \eqref{C1} is still intractable due to the multiplicative terms in both the objective function and constraints. Therefore, there are no standard methods for solving \eqref{C1} in general.

In the following two sections, we first propose the penalty-based SCA method for solving problem \eqref{C1}, which is capable of obtaining a local-optimal solution. Then, by deeply exploiting the inherent properties of problem \eqref{C1}, a low complexity algorithm is proposed based on successive refinement method, which is more appealing for a practical scenario with massive devices.

\section{Proposed Solution Based on Penalty Method}
In this section, we first reformulate problem \eqref{C1} into a mixed integer nonlinear programming (MINLP) problem equivalently. Regarding the reformulated problem, a penalty-based algorithm is employed to solve it. Specifically, the inner layer solves a penalized optimization problem with respect to the IRS beamforming and resource allocation, while the outer layer updates the penalty factor, until the convergence is achieved.

\subsection{Problem Reformulation into a MINLP Form}
For problem \eqref{C1}, a key observation is that both the designs of the IRS beamforming and resource allocation for the devices operated in the offloading mode have no impact on the computation rate of devices operated in the local computing mode. It implies that for any given ${{{\cal K}_{loc}}}$, we can independently optimize $\left\{ {{f_k},{t_k},\forall k \in {{\cal K}_{loc}}} \right\}$ without affecting the performance of other devices. Specifically, we have the following lemma with respect to the optimal local computing frequency and time.
\begin{lem}
For ${\forall k \in {{\cal K}_{loc}}}$, the optimal local computing frequency and time, denoted by $f_k^*$ and $t_k^*$, are given by $f_k^* = \min \left( {{{\left( {\frac{{{E_k}}}{{T{\gamma _c}}}} \right)}^{\frac{1}{3}}},{f_{\max }}} \right)$, $t_k^* = T$.
\end{lem}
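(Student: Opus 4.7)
The problem decouples across devices. For any fixed partition $\mathcal{K}_{\mathrm{off}}$, the variables $(f_k,t_k)$ for $k\in\mathcal{K}_{\mathrm{loc}}$ appear only in the local-computing summand $t_k f_k/C_k$ of \eqref{C1-a} and in the two constraints \eqref{C1-c} and \eqref{C1-f}, neither of which couples across devices or with the offloading variables. Hence I would first reduce the lemma to the per-device subproblem
\[
\max_{f_k,t_k}\; \frac{t_k f_k}{C_k}\quad \text{s.t.}\quad \gamma_c f_k^3 t_k\le E_k,\;\; 0\le t_k\le T,\;\; 0\le f_k\le f_{\max}.
\]

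The core step is to prove $t_k^\ast = T$ by an energy-preserving ``time-stretching'' argument. Assume for contradiction that an optimum $(f_k^\circ, t_k^\circ)$ has $t_k^\circ < T$, and define a candidate $(f_k', T)$ that consumes the \emph{same} energy, i.e.\ $(f_k')^3 T = (f_k^\circ)^3 t_k^\circ$, giving $f_k' = f_k^\circ(t_k^\circ/T)^{1/3} < f_k^\circ \le f_{\max}$. Feasibility is immediate. Comparing objectives,
\[
\frac{T f_k'}{C_k} \;=\; \frac{f_k^\circ t_k^\circ}{C_k}\left(\frac{T}{t_k^\circ}\right)^{2/3} \;>\; \frac{f_k^\circ t_k^\circ}{C_k},
\]
a strict improvement, which contradicts optimality. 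This uses the key structural fact that energy scales as $f_k^3 t_k$ while throughput scales as $f_k t_k$, so trading frequency for time is always profitable until $t_k$ hits its upper bound $T$.

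With $t_k=T$ fixed, the subproblem collapses to $\max_{f_k} f_k$ subject to the two linear bounds $f_k\le f_{\max}$ and (from \eqref{C1-c} at $t_k=T$) $f_k\le (E_k/(T\gamma_c))^{1/3}$. Since the objective is strictly increasing in $f_k$, the unique optimum is $f_k^\ast=\min\bigl((E_k/(T\gamma_c))^{1/3},f_{\max}\bigr)$, matching the claim.

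\emph{Anticipated difficulty.} There is no real obstacle once the reduction is seen; the only moment calling for care is the time-stretching inequality. A plausible wrong turn would be to case-split on which of the two upper bounds on $f_k$ binds and then try to conclude $t_k^\ast=T$ case by case, which obscures the uniform mechanism. Framing the argument directly as energy-preserving time expansion avoids that and makes the monotonicity transparent.
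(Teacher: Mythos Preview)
Your proof is correct and follows essentially the same logic as the paper's: both exploit that, at fixed energy $\gamma_c f_k^3 t_k$, the throughput $t_k f_k/C_k$ scales as $t_k^{2/3}$ and hence is maximized at $t_k=T$, after which the optimal $f_k$ is immediate. The paper reaches this by first arguing the energy constraint is active and then eliminating $f_k$ to display the $t_k^{2/3}$ dependence explicitly, whereas you phrase it as an energy-preserving time-stretching perturbation; the two presentations are equivalent.
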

\begin{proof}
It is obvious that for device ${k \in {{\cal K}_{loc}}}$, it will deplete all of its available energy for maximizing its computation rate, i.e., constraint \eqref{C1-c} holds with equality, since otherwise ${{t_k}}$ or ${{f_k}}$ can be always increased to improve the objective value such that \eqref{C1-c} is active. As such, the local computation rate for device $k$ is given by
\begin{align}\label{loc_computing}
R_k^{{\rm{loc}}} = \min \left( {\frac{1}{{{C_k}}}{{\left( {\frac{{{E_k}}}{{{\gamma _c}}}} \right)}^{\frac{1}{3}}}t_k^{2/3},\frac{{{t_k}{f_{\max }}}}{{{C_k}}}} \right),\forall k \in {{\cal K}_{loc}}.
\end{align}
It can be readily checked that $R_k^{{\rm{loc}}}$ is an increasing function with respect to ${t_k}$. Hence, the maximum local computing time is achieved by setting $t_k^* = T$. Accordingly, the optimal local computing frequency can be obtained as $f_k^* = \min \left( {{{\left( {\frac{{{E_k}}}{{T{\gamma _c}}}} \right)}^{\frac{1}{3}}},{f_{\max }}} \right),\forall k \in {{\cal K}_{loc}}$.
\end{proof}

Then, we introduce a set of binary variables, denoted by $\left\{ {{b_k}} \right\}$. Specifically, ${b_k} = 0$ indicates that device $k$ performs local computing while ${b_k} = 1$ represents that device $k$ performs task offloading. By further substituting $f_k^* = \min \left( {{{\left( {\frac{{{E_k}}}{{T{\gamma _c}}}} \right)}^{\frac{1}{3}}},{f_{\max }}} \right)$ and $t_k^* = T$ into problem \eqref{C1}, problem \eqref{C1} can be reformulated in an equivalent form as follows
\begin{subequations}\label{C2}
\begin{align}
\label{C2-a}\mathop {\max }\limits_{\left\{ {{\tau _{k,q}}} \right\},\left\{ {{p_{k,q}}} \right\}, \left\{ {{{\bf{v}}_q}} \right\}, \left\{ {{b_k}} \right\}, \left\{ {r_k^{{\rm{loc}}}} \right\}}  \;\;&B\sum\limits_{k \in K} {\left( {{b_k}\sum\limits_{q = 1}^Q {{\tau _{k,q}}{{\log }_2}\left( {1 + \frac{{{p_{k,q}}{{\left| {{h_{d,k}} + {\bf{q}}_k^H{{\bf{v}}_q}} \right|}^2}}}{{{\sigma ^2}}}} \right) + r_k^{{\rm{loc}}}} } \right)} \\
\label{C2-b}{\rm{s.t.}}\;\;\;\;\;\;\;\;\;\;\;\;\;\;\;\;\;\;&{b_k}\sum\limits_{q = 1}^Q {{\tau _{k,q}}{p_{k,q}}}  \le {E_k}, ~\forall {k} \in {{\cal K}},\\
\label{C2-c}&\sum\limits_{k \in {{\cal K}}} {{b_k}\sum\limits_{q = 1}^Q {{\tau _{k,q}}} }  \le T,\\
\label{C2-d}&{b_k}{\tau _{k,q}} \ge 0,{b_k}{p_{k,q}} \ge 0, ~\forall {k} \in {{\cal K}},~\forall q \in {\cal Q},\\
\label{C2-e}&r_k^{{\rm{loc}}} \le \left( {1 - {b_k}} \right)\frac{1}{{{C_k}}}{\left( {\frac{{{E_k}}}{{{\gamma _c}}}} \right)^{\frac{1}{3}}}{T^{\frac{2}{3}}}, ~\forall {k} \in {{\cal K}},\\
\label{C2-f}&r_k^{{\rm{loc}}} \le \left( {1 - {b_k}} \right)\frac{{T{f_{\max }}}}{{{C_k}}}, ~\forall {k} \in {{\cal K}},\\
\label{C2-g}&{b_k} \in \left\{ {0,1} \right\}~\forall {k} \in {{\cal K}},\\
\label{C2-h}&\eqref{C1-g}.
\end{align}
\end{subequations}
Note that problem \eqref{C2} is a non-convex MINLP problem due to binary constraint \eqref{C2-g} and highly coupled optimization variables in objective function \eqref{C2-a} and constraints \eqref{C2-b}-\eqref{C2-d}. Nevertheless, we propose an efficient algorithm by exploiting penalty-based SCA techniques in the following for solving this challenging problem.

\subsection{Penalty-Based SCA Method}
Before solving problem \eqref{C2}, we first transform it into a more tractable form. To this end, we introduce a set of slack variables $\left\{ {{b_{k,q}}} \right\}$ and then problem \eqref{C2} can be equivalently transformed to
\begin{subequations}\label{C3}
\begin{align}
\label{C3-a}\mathop {\max }\limits_{\left\{ {{\tau _{k,q}}} \right\}, \left\{ {{{\bf{v}}_q}} \right\}, \left\{ {{b_k}} \right\}, \left\{ {{b_{k,q}}} \right\}, \left\{ {r_k^{{\rm{loc}}}} \right\}}  \;\;&B\sum\limits_{k \in {\cal K}} {\left( {\sum\limits_{q = 1}^Q {{\tau _{k,q}}{{\log }_2}\left( {1 + \frac{{{b_{k,q}}{E_k}{{\left| {{h_{d,k}} + {\bf{q}}_k^H{{\bf{v}}_q}} \right|}^2}}}{{{\tau _{k,q}}{\sigma ^2}}}} \right) + r_k^{{\rm{loc}}}} } \right)} \\
\label{C3-b}{\rm{s.t.}}\;\;\;\;\;\;\;\;\;\;\;\;\;\;\;\;\;\;&\sum\limits_{k \in {\cal K}} {\sum\limits_{q = 1}^Q {{\tau _{k,q}}} }  \le T,\\
\label{C3-c}&{\tau _{k,q}} \ge 0, {b_{k,q}} \ge 0, ~\forall {k} \in {{\cal K}},~\forall q \in {\cal Q},\\
\label{C3-d}&r_k^{{\rm{loc}}} \le \frac{1}{{{C_k}}}{\left( {\frac{{\left( {1 - {b_k}} \right){E_k}}}{{{\gamma _c}}}} \right)^{\frac{1}{3}}}{T^{\frac{2}{3}}}, ~\forall {k} \in {{\cal K}},\\
\label{C3-e}&\sum\limits_{q = 1}^Q {{b_{k,q}}}  \le {b_k}, ~\forall {k} \in {{\cal K}},\\
\label{C3-f}&\eqref{C2-f}, \eqref{C2-g}, \eqref{C1-g}.
\end{align}
\end{subequations}

Although problem \eqref{C3} is still a non-convex optimization problem, the optimization variables in constraints \eqref{C2-b}-\eqref{C2-d} are fully decoupled in constraints \eqref{C3-b} and \eqref{C3-c}. For problem \eqref{C3}, constraints \eqref{C3-b}-\eqref{C3-e} are convex and the remaining challenges for solving it are non-concave objective function \eqref{C3-a}, the set of binary variables $\left\{ {{b_k}} \right\}$ in \eqref{C2-g}, and unit-modulus constraints in \eqref{C1-g}. Regarding the binary optimization variables, we transform constraint \eqref{C2-g} into the following equality constraint as
\begin{align}\label{equality_constraint}
{b_k} - b_k^2 = 0,~k \in {\cal K}.
\end{align}
By relaxing ${b_k}$ as a continues variable, i.e., $0 \le {b_k} \le 1$, we always have ${b_k} - b_k^2 \ge 0$, where the equality holds if and only if ${b_k} = 0$ or ${b_k} = 1$, i.e., ${b_k}$ is a binary variable. In the following, we employ the principle of the penalty-based methods by integrating constraint \eqref{equality_constraint} into the objective function of problem \eqref{C3}. Specifically, we add constraint \eqref{equality_constraint} as a penalty term in objective function \eqref{C3}, yielding the following optimization problem
\begin{subequations}\label{C4}
\begin{align}
\label{C4-a}\mathop {\max }\limits_{\left\{ {{\tau _{k,q}}} \right\}, \left\{ {{{\bf{v}}_q}} \right\}, \left\{ {{b_k}} \right\}, \left\{ {{b_{k,q}}} \right\}, \left\{ {r_k^{{\rm{loc}}}} \right\}}  \;\;&B\sum\limits_{k \in {\cal K}} {\left( {\sum\limits_{q = 1}^Q {{\tau _{k,q}}{{\log }_2}\left( {1 + \frac{{{b_{k,q}}{E_k}{{\left| {{h_{d,k}} + {\bf{q}}_k^H{{\bf{v}}_q}} \right|}^2}}}{{{\tau _{k,q}}{\sigma ^2}}}} \right) + r_k^{{\rm{loc}}}} } \right)}\nonumber\\
& - \frac{1}{{2\rho }}\sum\limits_{k \in {\cal K}} {\left( {{b_k} - b_k^2} \right)}\\
\label{C4-b}{\rm{s.t.}}\;\;\;\;\;\;\;\;\;\;\;\;\;\;\;\;\;\;&0 \le {b_k} \le 1,~\forall {k} \in {{\cal K}},\\
\label{C4-c}&\eqref{C3-b}, \eqref{C3-c}, \eqref{C3-d}, \eqref{C3-e}, \eqref{C1-g}, \eqref{C2-f}, \eqref{C1-g},
\end{align}
\end{subequations}
where $\rho  > 0$ denotes the penalty factor adopted to penalize the violation of equality constraints in \eqref{equality_constraint}. It can be verified that, when $\rho  \to 0$, it follows that ${1 \mathord{\left/
 {\vphantom {1 {2\rho }}} \right.
 \kern-\nulldelimiterspace} {2\rho }} \to \infty$ and thus the solution obtained from problem \eqref{C4} satisfies equality constraint \eqref{equality_constraint}. Note that it is not effective to initialize $\rho$ to be a very small value since the objective value will be dominated by the penalty term and thus the term related to computation rate will be diminished in this case. Instead, initializing $\rho$ to be a sufficiently large value is a practically desirable way to obtain a good starting point for the proposed algorithm. By gradually decreasing the value of $\rho$, a solution that satisfies the binary constraint \eqref{equality_constraint} can be obtained within a predefined accuracy. For any given $\rho $, problem \eqref{C4} is still non-convex due to the non-concave objective function as well as unit-modules constraints in \eqref{C1-g}. Nevertheless, by applying proper change of variables, we employ SCA techniques to solve \eqref{C4} efficiently in the inner layer, where all the variables are optimized simultaneously. While in the outer layer, the penalty factor $\rho $ is updated, until the convergence is achieved.
 \subsubsection{Inner-Layer Optimization}
To deal with the non-concave objective function \eqref{C4-a}, we introduce two sets of slack variables, denoted by $\left\{ {{S_k}} \right\}$ and $\left\{ {{\eta _k}} \right\}$, and reformulate problem \eqref{C4} as follows
\begin{subequations}\label{C5}
\begin{align}
\label{C5-a}\mathop {\max }\limits_{\scriptstyle\left\{ {r_k^{{\rm{loc}}}} \right\},\left\{ {{{\bf{v}}_q}} \right\},\left\{ {{b_k}} \right\},\left\{ {{\eta _k}} \right\}\hfill\atop
\scriptstyle\left\{ {{\tau _{k,q}}} \right\},\left\{ {{S_{k,q}}} \right\},\left\{ {{b_{k,q}}} \right\}\hfill}  \;\;&B\sum\limits_{k \in {\cal K}} {\left( {\sum\limits_{q = 1}^Q {{\tau _{k,q}}{{\log }_2}\left( {1 + \frac{{{S_{k,q}}}}{{{\tau _{k,q}}{\sigma ^2}}}} \right) + r_k^{{\rm{loc}}} - \frac{1}{{2\rho }}{\eta _k}} } \right)}\\
\label{C5-b}{\rm{s.t.}}\;\;\;\;\;\;\;\;\;\;\;\;&{S_{k,q}} \le {b_{k,q}}{E_k}{\left| {{h_{d,k}} + {\bf{q}}_k^H{{\bf{v}}_q}} \right|^2},~\forall {k} \in {{\cal K}}, \forall q \in {\cal Q},\\
\label{C5-c}&{\eta _k} \ge {b_k} - b_k^2, ~\forall {k} \in {{\cal K}},\\
\label{C5-d}&\eqref{C3-b}, \eqref{C3-c}, \eqref{C3-d}, \eqref{C3-e}, \eqref{C1-g}, \eqref{C2-f}, \eqref{C4-b}.
\end{align}
\end{subequations}
Note that for the optimal solution of problem \eqref{C5}, constraints \eqref{C5-b} and \eqref{C5-c} are met with equality, since otherwise the objective value can be always increased by increasing ${S_{k,q}}$ or decreasing ${{\eta _k}}$ until \eqref{C5-b} and \eqref{C5-c} become active. As such, problem \eqref{C5} is equivalent to problem \eqref{C4}. For problem \eqref{C5}, the objective function is concave and the remaining challenges for solving it are non-convex constraints \eqref{C5-b} and \eqref{C5-c}.

Regarding constraint \eqref{C5-c}, it can be observed that its right-hand-side (RHS) is a concave function with respect to ${b_k}$, which motivates us to employ the SCA technique. For given points $\left\{ {b_k^{\left( l \right)}} \right\}$ in the $l$-th iteration, an upper bound for the RHS of \eqref{C5-c} can be obtained by using first-order Taylor expansion as follows
\begin{align}\label{upper_bound_penalty}
{b_k} - b_k^2 &\le {b_k} - {\left( {b_k^{\left( l \right)}} \right)^2} - 2b_k^{\left( l \right)}\left( {{b_k} - b_k^{\left( l \right)}} \right) \nonumber\\
& = \left( {1 - 2b_k^{\left( l \right)}} \right){b_k} + {\left( {b_k^{\left( l \right)}} \right)^2} \buildrel \Delta \over = f_k^{ub}\left( {{b_k},b_k^{\left( l \right)}} \right),\forall k \in {\cal K},
\end{align}
which is linear and convex with respect to $\left\{ {{b_k}} \right\}$.

For constraint \eqref{C5-b}, the optimization variables $\left\{ {{b_k}} \right\}$ and $\left\{ {{{\bf{v}}_q}} \right\}$ are closely coupled in its RHS. To tackle this difficulty, we apply a change of variables as follows. Let ${\left| {{h_{d,k}} + {\bf{q}}_k^H{{\bf{v}}_q}} \right|^2} = {\left| {{\bf{\bar q}}_k^H{{{\bf{\bar v}}}_q}} \right|^2}$, where ${{{\bf{\bar v}}}_q} = {\left[ {{\bf{v}}_q^H,1} \right]^H}$ and ${\bf{\bar q}}_k^H = \left[ {{\bf{q}}_k^H,{h_{d,k}}} \right]$. Define ${{{\bf{\bar Q}}}_k} = {{{\bf{\bar q}}}_k}{\bf{\bar q}}_k^H$, ${{{\bf{\bar V}}}_q} = {{{\bf{\bar v}}}_q}{\bf{\bar v}}_q^H$, which needs to satisfy ${{{\bf{\bar V}}}_q} \succeq {\bf{0}}$, ${\rm{rank}}\left( {{{{\bf{\bar V}}}_q}} \right) = 1$, and ${\left[ {{{{\bf{\bar V}}}_q}} \right]_{N + 1,N + 1}} = 1$. Thus, we can rewrite the RHS of constraint \eqref{C5-b} as
\begin{align}\label{transformation}
{b_{k,q}}{E_k}{\left| {{h_{d,k}} + {\bf{q}}_k^H{{\bf{v}}_q}} \right|^2} = {E_k}{b_{k,q}}{\mathop{\rm Tr}\nolimits} \left( {{{{\bf{\bar V}}}_q}{{{\bf{\bar Q}}}_k}} \right),~\forall {k} \in {{\cal K}}, \forall q \in {\cal Q}.
\end{align}
We note that the RHS of \eqref{transformation} is a bilinear function of the optimization variables ${b_{k,q}}$ and ${{{{\bf{\bar V}}}_q}}$, which is still non-convex. To circumvent this challenge, we further transform it into the following difference of convex (DC) functions as
\begin{align}\label{DC}
{E_k}{b_{k,q}}{\mathop{\rm Tr}\nolimits} \left( {{{{\bf{\bar V}}}_q}{{{\bf{\bar Q}}}_k}} \right) = \frac{1}{2}\left( {{{\left( {{b_{k,q}} + {\mathop{\rm Tr}\nolimits} \left( {{{{\bf{\bar V}}}_q}{{{\bf{\bar Q}}}_k}} \right)} \right)}^2} - b_{k,q}^2 - {{\left( {{\mathop{\rm Tr}\nolimits} \left( {{{{\bf{\bar V}}}_q}{{{\bf{\bar Q}}}_k}} \right)} \right)}^2}} \right), ~\forall {k} \in {{\cal K}}, \forall q \in {\cal Q}.
\end{align}
To this end, we apply the SCA method to approximate \eqref{DC} into a concave form. Specifically, for any given point $\left\{ {b_{k.q}^{\left( l \right)},{\bf{\bar V}}_q^{\left( l \right)}} \right\}$ in the $l$-th iteration, a lower bound of \eqref{DC} can be obtained as
\begin{align}\label{lower_bound_DC}
&{E_k}{b_{k,q}}{\mathop{\rm Tr}\nolimits} \left( {{{{\bf{\bar V}}}_q}{{{\bf{\bar Q}}}_k}} \right) \nonumber\\
& \ge \frac{1}{2}\left( {{{\left( {b_{k,q}^{\left( l \right)} + {\mathop{\rm Tr}\nolimits} \left( {{\bf{\bar V}}_q^{\left( l \right)}{{{\bf{\bar Q}}}_k}} \right)} \right)}^2} - b_{k,q}^2 - {{\left( {{\mathop{\rm Tr}\nolimits} \left( {{{{\bf{\bar V}}}_q}{{{\bf{\bar Q}}}_k}} \right)} \right)}^2}} \right)\nonumber\\
& + \left( {b_{k,q}^{\left( l \right)} + {\mathop{\rm Tr}\nolimits} \left( {{\bf{\bar V}}_q^{\left( l \right)}{{{\bf{\bar Q}}}_k}} \right)} \right)\left( {{b_{k,q}} - b_{k,q}^{\left( l \right)} + {\mathop{\rm Tr}\nolimits} \left( {{{{\bf{\bar V}}}_q}{{{\bf{\bar Q}}}_k}} \right) - {\mathop{\rm Tr}\nolimits} \left( {{\bf{\bar V}}_q^{\left( l \right)}{{{\bf{\bar Q}}}_k}} \right)} \right)\nonumber\\
& \buildrel \Delta \over = g_{k,q}^{lb}\left( {b_{k,q}^{\left( l \right)},{\bf{\bar V}}_q^{\left( l \right)},{b_{k,q}},{{{\bf{\bar V}}}_q}} \right), ~\forall {k} \in {{\cal K}}, \forall q \in {\cal Q},
\end{align}
where $g_{k,q}^{lb}\left( {b_{k,q}^{\left( l \right)},{\bf{\bar V}}_q^{\left( l \right)},{b_{k,q}},{{{\bf{\bar V}}}_q}} \right)$ is concave with respect to ${b_{k,q}}$ and ${{{{\bf{\bar V}}}_q}}$.

With \eqref{upper_bound_penalty} and \eqref{lower_bound_DC}, problem \eqref{C4} can be approximated as
\begin{subequations}\label{C6}
\begin{align}
\label{C6-a}\mathop {\max }\limits_{\scriptstyle\left\{ {r_k^{{\rm{loc}}}} \right\},\left\{ {{{{{\bf{\bar V}}}_q}}} \right\},\left\{ {{b_k}} \right\},\left\{ {{\eta _k}} \right\}\hfill\atop
\scriptstyle\left\{ {{\tau _{k,q}}} \right\},\left\{ {{S_{k,q}}} \right\},\left\{ {{b_{k,q}}} \right\}\hfill}  \;\;&B\sum\limits_{k \in {\cal K}} {\left( {\sum\limits_{q = 1}^Q {{\tau _{k,q}}{{\log }_2}\left( {1 + \frac{{{S_{k,q}}}}{{{\tau _{k,q}}{\sigma ^2}}}} \right) + r_k^{{\rm{loc}}} - \frac{1}{{2\rho }}{\eta _k}} } \right)}\\
\label{C6-b}{\rm{s.t.}}\;\;\;\;\;\;\;\;\;\;\;\;&{S_{k,q}} \le g_{k,q}^{lb}\left( {b_{k,q}^{\left( l \right)},{\bf{\bar V}}_q^{\left( l \right)},{b_{k,q}},{{{\bf{\bar V}}}_q}} \right),~\forall {k} \in {{\cal K}}, \forall q \in {\cal Q},\\
\label{C6-c}&{\eta _k} \ge f_k^{ub}\left( {{b_k},b_k^{\left( l \right)}} \right), ~\forall {k} \in {{\cal K}},\\
\label{C6-d}&{\left[ {{{{\bf{\bar V}}}_q}} \right]_{n,n}} = 1, ~n = 1, \ldots ,N + 1, \\
\label{C6-e}&{\rm{rank}}\left( {{{{\bf{\bar V}}}_q}} \right) = 1,\\
\label{C6-f}&\eqref{C3-b}, \eqref{C3-c}, \eqref{C3-d}, \eqref{C3-e}, \eqref{C2-f}, \eqref{C4-b}.
\end{align}
\end{subequations}
Problem \eqref{C6} becomes a convex semidefinite program (SDP) by relaxing the rank-one constraint \eqref{C6-e}. As such, we can successively solve it with the help of standard convex optimization techniques, such as CVX, until convergence is achieved. However, the obtained ${{{{\bf{\bar V}}}_q}}$ may not satisfy the rank-one constraint. Although Gaussian randomization can be employed to construct a rank-one solution after convergence, it may not be in capable of guaranteeing globally and/or locally optimal solution. To overcome this issue, we transform the rank-one constraint \eqref{C6-e} into the following equivalence
\begin{align}\label{rank_one}
{\rm{rank}}\left( {{{{\bf{\bar V}}}_q}} \right) = 1 \Leftrightarrow {\mathop{\rm Tr}\nolimits} \left( {{{{\bf{\bar V}}}_q}} \right) - {\left\| {{{{\bf{\bar V}}}_q}} \right\|_2} = 0.
\end{align}
Then, the resulting problem can be efficiently solved by using the DC-based penalty framework \cite{Fu2021reconfigurable}. The details are omitted due to its brevity.
\subsubsection{Outer Layer Iteration}
In the outer layer, we gradually decrease the value of the penalty factor $\rho$ as follows
\begin{align}\label{update_penalty}
\rho : = c\rho, ~0 < c < 1,
\end{align}
where $c$ denotes the step size. The selection of the value for $c$ strikes a balance between the performance and the associated number of iterations. Generally, a larger value of $c$ can achieve better performance but at the cost of more iterations in the outer layer.
\subsubsection{Overall Algorithm and Computational Complexity Analysis}
Recall that the equality constraint \eqref{equality_constraint} for problem \eqref{C3} needs to be satisfied in the converged solution. To this end, for evaluating whether the obtained solution violates the equality constraint or not, we introduce a indicator $\xi$ defined as
\begin{align}\label{indicator}
\xi  = \max \left\{ {{b_k} - b_k^2,\forall k \in {\cal K}} \right\}.
\end{align}
The algorithm is terminated when $\xi  \le \chi$, where $\chi$ is a predefined threshold for characterizing the accuracy of the equality constraint. The details of the proposed algorithm are summarized in Algorithm 1. Note that in the inner layer of our proposed algorithm, a local optimal solution can be obtained since all the optimization variables are optimized simultaneously in an iterative way by using SCA. Therefore, based on the results in \cite{Shi2016joint}, the obtained solution converges to a point satisfying the Karush-Kuhn-Tucker (KKT) conditions of the original problem.

The complexity of Algorithm 1 can be summarized as follows. In the inner layer, the main complexity comes from step 5. Specifically, \eqref{C6} can be solved by the interior-point method, whose complexity is given by ${\cal O}\left( {{{\left( {3K\left( {Q + 1} \right) + QN} \right)}^{3.5}}} \right)$. As such, the total complexity of Algorithm 1 is ${\cal O}\left( {{L_{outer}}{L_{inner}}{{\left( {3K\left( {Q + 1} \right) + QN} \right)}^{3.5}}} \right)$, where ${{L_{outer}}}$ and ${{L_{inner}}}$ represent the number of iterations required for reaching convergence in the outer layer and inner layer, respectively.
\begin{algorithm}[!t]
 \caption{Penalty-based algorithm for solving problem \eqref{C1}.}
 \label{alg1}
 \begin{algorithmic}[1]
  \STATE  \textbf{Initialize} ${{\bf{\bar V}}_q^{\left( 0 \right)}}$, $b_k^{\left( 0 \right)}$, $b_{k,q}^{\left( 0 \right)}$, $c$, $\rho$, $\varepsilon$, and  $\chi$.
  \STATE  \textbf{repeat: outer layer}
  \STATE \quad Set iteration index $l = 0$ for inner layer.
  \STATE \quad \textbf{repeat: inner layer }
  \STATE  \qquad For given $\left\{ {{\bf{\bar V}}_q^{\left( l \right)},b_k^{\left( l \right)},b_{k,q}^{\left( l \right)}} \right\}$, solve problem \eqref{C6}.
  \STATE  \qquad Update $\left\{ {{\bf{\bar V}}_q^{\left( l+1 \right)},b_k^{\left( l+1 \right)},b_{k,q}^{\left( l+1 \right)}} \right\}$ with the obtained solution, and $l = l + 1$.
  \STATE \quad \textbf{until} the fractional increase of the objective value of \eqref{C6} is below a threshold $\varepsilon$.
  \STATE  \quad Update penalty factor $\rho$ based on \eqref{update_penalty}.
  \STATE \textbf{until}  penalty  violation $\xi$ is below a threshold $\chi$.
 \end{algorithmic}
\end{algorithm}
\begin{rem}
Although in this paper, we focus on IRS-aided MEC systems with binary offloading, the proposed Algorithm 1 can be directly extended to a similar system with partial offloading. Suppose an MEC system employing dynamic IRS beamforming with partial offloading, the corresponding optimization problem can be formulated from problem \eqref{C3} by replacing constraint \eqref{C2-g} with a relaxed version, i.e., $0 \le {b_k} \le 1,\forall k \in {\cal K}$, which is convex. The remaining issues are the the non-concave objective function as well as unit-modules constraints, which can be tackled in a similar manner as previously discussed (i.e., \eqref{C5-a}, \eqref{transformation}, \eqref{DC}, and \eqref{lower_bound_DC}). As such, the computation rate maximization problem for an MEC system employing dynamic IRS beamforming with partial offloading can still be solved by the proposed algorithm.
\end{rem}

\section{Low Complexity Solution by Succussive Refinement}
Although Algorithm 1 proposed in the previous section yields high-quality solution for the original problem \eqref{C1}, its computational complexity can still be high with increasing $K$, $Q$, and $N$. To overcome this issue, in this section, we propose a more computationally efficient algorithm for problem \eqref{C1} by deeply exploiting its inherent structures. In the following subsections, we first derive a sufficient and necessary condition for determining whether a device should activate offloading. Then, based on the offloading activation condition, we propose a succussive refinement-based algorithm for solving a special case of problem \eqref{C1}, where the number of IRS reconfiguration times is asymptotically large. Finally, the proposed algorithm for the special case is extended to solve the original problem \eqref{C1}.
\vspace{-12pt}
\subsection{When to Activate Offloading?}
Before answering the question regarding when to activate offloading for each device, we first provide the following lemma to shed light on how the $\left| {{{\cal K}_{{\rm{off}}}}} \right|$ users make use of the $Q$ available IRS beamforming vectors for assisting task offloading.
\begin{lem}
Denote the optimal IRS beamforming vectors for problem \eqref{C1} by ${\bf{v}}_q^*,q \in {\cal Q}$. Then, problem \eqref{C1} is equivalent to the following problem
\begin{subequations}\label{C7}
\begin{align}
\label{C7-a}\mathop {\max }\limits_{\left\{ {{\tau _k}} \right\}, {{{\cal K}_{{\rm{off}}}}}}  \;\;&B\sum\limits_{k \in {{\cal K}_{{\rm{off}}}}} {{\tau _k}{{\log }_2}} \left( {1 \!\!+\!\! \frac{{{E_k}{{\left| {{h_{d,k}} + {\bf{q}}_k^H{\bf{v}}_{\Omega \left( k \right)}^*} \right|}^2}}}{{{\tau _k}{\sigma ^2}}}} \right) \!\!+\!\! \sum\limits_{k \in {{\cal K}_{{\rm{loc}}}}} {\min \left( {\frac{{T{f_{\max }}}}{{{C_k}}},\frac{{{T^{2/3}}{{\left( {\frac{{{E_k}}}{{{\gamma _c}}}} \right)}^{1/3}}}}{{{C_k}}}} \right)} \\
\label{C7-b}{\rm{s.t.}}\;\;\;\;&\sum\nolimits_{k \in {{\cal K}_{{\rm{off}}}}} {{\tau _k} \le T},\\
\label{C7-e}&{\tau _k} \ge 0, ~\forall {k} \in {{\cal K}_{{\rm{off}}}},\\
\label{C7-h}&{{\cal K}_{{\rm{off}}}} \subseteq {\cal K},~{{\cal K}_{{\rm{loc}}}} = {\cal K}\backslash {{\cal K}_{{\rm{off}}}},
\end{align}
\end{subequations}
where $\Omega \left( k \right) = \arg \mathop {\max }\limits_{q \in {\cal Q}} {\left| {{h_{d,k}} + {\bf{q}}_k^H{\bf{v}}_q^*} \right|^2}$.
\end{lem}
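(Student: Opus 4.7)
The plan is to demonstrate two reductions. First, the local-computing part of the objective collapses to the closed-form expression already extracted in Lemma~1. Second, and more importantly, holding the optimal beamformers $\{{\bf v}_q^\star\}$ fixed, each offloading device's time and energy collapse onto the single sub-slot whose effective channel is strongest---exactly the index $\Omega(k)$.

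For the local-computing part I would simply substitute $t_k^\star=T$ and $f_k^\star=\min\{(E_k/(T\gamma_c))^{1/3},f_{\max}\}$ from Lemma~1, which yields $R_k^{\rm loc}=\min\{Tf_{\max}/C_k,\,T^{2/3}(E_k/\gamma_c)^{1/3}/C_k\}$ for every $k\in\mathcal{K}_{\rm loc}$, reproducing the second sum in \eqref{C7-a}.

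For the offloading part, write $h_{k,q}:=|h_{d,k}+{\bf q}_k^H{\bf v}_q^\star|^2$, $e_{k,q}:=p_{k,q}\tau_{k,q}$, and define the per-sub-slot rate $g(\tau,e;h):=\tau\log_2(1+eh/(\tau\sigma^2))$, extended continuously by $g(0,0;h):=0$. Since $g$ is the perspective of the concave function $e\mapsto\log_2(1+eh/\sigma^2)$, it is jointly concave in $(\tau,e)$ and positively homogeneous of degree one; these two properties together yield the superadditivity inequality $g(\tau_1+\tau_2,\,e_1+e_2;\,h)\ge g(\tau_1,e_1;h)+g(\tau_2,e_2;h)$. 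Moreover, a direct derivative calculation shows $g$ is non-decreasing in $h$ for $e\ge 0$. Letting $q^\star:=\Omega(k)$, I would chain these two facts to obtain, for every $q\neq q^\star$,
\[
g\bigl(\tau_{k,q^\star}+\tau_{k,q},\,e_{k,q^\star}+e_{k,q};\,h_{k,q^\star}\bigr) \;\ge\; g(\tau_{k,q^\star},e_{k,q^\star};h_{k,q^\star}) + g(\tau_{k,q},e_{k,q};h_{k,q}),
\]
where the first step applies superadditivity at the larger channel $h_{k,q^\star}$ and the second step uses monotonicity in $h$. Iterating this reassignment over all $q\neq q^\star$ consolidates device $k$'s entire time and energy budget onto sub-slot $q^\star$ while keeping $\sum_q\tau_{k,q}=\tau_k$ and $\sum_q e_{k,q}\le E_k$ intact, so constraints \eqref{C1-b}--\eqref{C1-e} remain satisfied and the offloading sum rate never decreases.

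Finally, after consolidation the energy budget must bind at optimality---otherwise raising $p_{k,q^\star}$ strictly improves the rate---so $p_{k,q^\star}\tau_k=E_k$, and device $k$'s offloading contribution becomes precisely $B\tau_k\log_2(1+E_k h_{k,q^\star}/(\tau_k\sigma^2))$, matching the first sum of \eqref{C7-a}. Constraints \eqref{C7-b} and \eqref{C7-e} are then the single-sub-slot restrictions of \eqref{C1-d} and \eqref{C1-e}, and \eqref{C7-h} is unchanged. The main obstacle is the consolidation argument: one has to recognize the perspective-function structure behind $g$, invoke positive homogeneity together with concavity to secure superadditivity, and pair it with the monotonicity of $g$ in $h$ so that the exchange from an inferior sub-slot to $q^\star$ is simultaneously feasibility-preserving and rate-non-decreasing.
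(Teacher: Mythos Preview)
Your argument is correct. The perspective-function observation that $g(\tau,e;h)=\tau\log_2(1+eh/(\tau\sigma^2))$ is jointly concave and positively homogeneous of degree one, hence superadditive in $(\tau,e)$, combined with monotonicity in $h$, cleanly justifies consolidating each offloading device's time and energy onto the single sub-slot $\Omega(k)$; the subsequent binding-energy step and the Lemma~1 substitution for the local side are both routine. The paper itself does not spell out a proof here---it defers entirely to the authors' prior work \cite{wu2021irs}---so your self-contained derivation in fact supplies more detail than the paper does, and the structure (reduce to best sub-slot via concavity, then saturate the energy constraint) is exactly the standard route one expects for this class of TDMA rate-maximization problems.
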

\begin{proof}
The main procedures are similar to our previous work \cite{wu2021irs}. Interested reader may refer to \cite{wu2021irs} for more details.
\end{proof}

Lemma 2 provides an important insight into the optimal association between the given IRS beamforming vectors and the offloading devices. In particular, each device only needs to occupy one IRS beamforming vector for task offloading and uses up all of its energy. Based on Lemma 2, we further derive a sufficient and necessary condition for determining whether an arbitrary device should be scheduled for task offloading. The results are shown in the following theorem.
\begin{thm}\label{thm1}
(Offloading Activation Condition): Suppose that the optimal IRS beamforming vectors for problem \eqref{C4} are given by (${\bf{v}}_q^*$'s), $q \in {\cal Q}$. Given any activate offloading device set ${{\cal K}_{{\rm{off}}}} \subseteq {\cal K}$ and any not activate offloading device set ${{\cal K}_{{\rm{loc}}}} = {\cal K}\backslash {{\cal K}_{{\rm{off}}}}$, the device $k' \in {{\cal K}_{{\rm{loc}}}}$ can be activated for offloading and added to ${{\cal K}_{{\rm{off}}}}$ if and only if
\begin{align}\label{offloading_condition}
&BT{\log _2}\left( {\frac{{T{\sigma ^2} + {E_{k'}}{{\left| {{h_{d,k'}} + {\bf{q}}_{k'}^H{\bf{v}}_{\Omega \left( {k'} \right)}^*} \right|}^2} + \sum\nolimits_{k \in {{\cal K}_{{\rm{off}}}}} {{E_k}{{\left| {{h_{d,k}} + {\bf{q}}_k^H{\bf{v}}_{\Omega \left( k \right)}^*} \right|}^2}} }}{{T{\sigma ^2} + \sum\nolimits_{k \in {{\cal K}_{{\rm{off}}}}} {{E_k}{{\left| {{h_{d,k}} + {\bf{q}}_k^H{\bf{v}}_{\Omega \left( k \right)}^*} \right|}^2}} }}} \right)\nonumber\\
&\ge\min \left( {\frac{{T{f_{\max }}}}{{{C_{k'}}}},\frac{{{T^{2/3}}}}{{{C_{k'}}}}{{\left( {\frac{{{E_{k'}}}}{{{\gamma _c}}}} \right)}^{1/3}}} \right).
\end{align}
\end{thm}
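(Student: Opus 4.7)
The plan is to reduce problem \eqref{C1} to a scalar-parameter form via Lemma 2, solve the resulting concave time-allocation subproblem in closed form, and then read off the ``if and only if'' by a direct marginal-gain comparison.

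First, Lemma 2 allows me to replace \eqref{C1} by problem \eqref{C7}: in any optimal solution each offloading device $k\in{\cal K}_{{\rm{off}}}$ is paired with the single best available IRS vector $\mathbf{v}^*_{\Omega(k)}$ and depletes its energy budget $E_k$, so the only remaining decisions are the partition $({\cal K}_{{\rm{off}}},{\cal K}_{{\rm{loc}}})$ and the time allocation $\{\tau_k\}_{k\in{\cal K}_{{\rm{off}}}}$. The objective then decomposes additively into a local-computing contribution, whose per-device rate is available in closed form from Lemma 1, and an offloading contribution $R_{{\rm{off}}}({\cal K}_{{\rm{off}}})$ that depends only on $\{\tau_k\}_{k\in{\cal K}_{{\rm{off}}}}$.

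Second, I would compute $R_{{\rm{off}}}({\cal K}_{{\rm{off}}})$ explicitly. Setting $h_k^{{\rm{eff}}}=|h_{d,k}+\mathbf{q}_k^H\mathbf{v}^*_{\Omega(k)}|^2$, the objective $\sum_{k}B\tau_k\log_2\!\left(1+E_k h_k^{{\rm{eff}}}/(\tau_k\sigma^2)\right)$ is a sum of perspective functions of a concave $\log(1+\cdot)$ and is hence jointly concave in $\{\tau_k\}$, so KKT conditions are sufficient. Rewriting the stationarity equation in terms of the effective received SNR $y_k:=1+E_k h_k^{{\rm{eff}}}/(\tau_k\sigma^2)$ reduces it to $\log y_k-1+1/y_k=\lambda$, which has a unique root in $[1,\infty)$, forcing all $y_k$ to share a common value $c$. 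Combining $y_k=c$ with the active sum-time constraint $\sum_k\tau_k=T$ yields $c=\bigl(T\sigma^2+\sum_{k\in{\cal K}_{{\rm{off}}}}E_k h_k^{{\rm{eff}}}\bigr)/(T\sigma^2)$ and hence the compact formula $R_{{\rm{off}}}({\cal K}_{{\rm{off}}})=BT\log_2(c)$.

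Third, because the total objective decomposes as $R_{{\rm{off}}}({\cal K}_{{\rm{off}}})+\sum_{k\in{\cal K}_{{\rm{loc}}}}R_k^{{\rm{loc}}}$, swapping $k'$ from ${\cal K}_{{\rm{loc}}}$ into ${\cal K}_{{\rm{off}}}$ does not degrade the objective if and only if the marginal offloading gain $R_{{\rm{off}}}({\cal K}_{{\rm{off}}}\cup\{k'\})-R_{{\rm{off}}}({\cal K}_{{\rm{off}}})$ is at least the per-device local rate $\min(Tf_{\max}/C_{k'},T^{2/3}(E_{k'}/\gamma_c)^{1/3}/C_{k'})$. Substituting the closed form for $R_{{\rm{off}}}$ into the difference reproduces the left-hand side of \eqref{offloading_condition} verbatim, which closes both directions of the equivalence.

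The principal obstacle is the KKT collapse in the second step: showing that stationarity forces a single common SNR $y_k=c$ for every active offloader is what makes the closed form possible, and it is specific to the perspective-type structure $\tau\log(1+a/\tau)$. A minor technicality is verifying that the implied interior solution $\tau_k^{*}=T E_k h_k^{{\rm{eff}}}/\sum_{j\in{\cal K}_{{\rm{off}}}}E_j h_j^{{\rm{eff}}}$ is feasible and strictly positive, which holds automatically whenever $h_k^{{\rm{eff}}}>0$ for every offloading device; otherwise one would simply drop the zero-channel device from ${\cal K}_{{\rm{off}}}$ a priori without affecting the comparison.
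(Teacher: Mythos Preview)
Your proposal is correct and follows essentially the same route as the paper: reduce via Lemma~2 to the concave time-allocation problem, use KKT stationarity to force a common received SNR across all offloading devices, obtain the closed form $R_{{\rm{off}}}({\cal K}_{{\rm{off}}})=BT\log_2\!\bigl(1+\sum_{k\in{\cal K}_{{\rm{off}}}}E_k h_k^{{\rm{eff}}}/(T\sigma^2)\bigr)$, and then compare the marginal offloading gain with the local-computing rate from Lemma~1. The only cosmetic difference is your change of variable $y_k=1+\gamma_k$ versus the paper's $\gamma_k$, which does not affect the argument.
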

\begin{proof}
Please refer to Appendix A.
\end{proof}

The interpretation of Theorem 1 is that if the new device $k'$ wants to join the offloading device set, the improvement of computation rate attained by offloading should be higher than that brought by local computing. Lemma 2 and Theorem 1 serve as foundation for developing the low-complexity algorithm for the original problem \eqref{C1}.

However, the optimal IRS beamforming vectors ${\bf{v}}_q^*$'s for assisting offloading remain unknown yet and the explicit association relationship between IRS beamforming vectors and offloading devices is not clear. In the next subsection, we first consider a special case of problem \eqref{C1}, where the number of IRS reconfiguration times is asymptotically large, i.e., $Q \to \infty$. Then, the proposed solution is further extended to the general case of any finite value $Q$.

\vspace{-12pt}
\subsection{Special Case Study: $Q \to \infty$}
\vspace{-4pt}
In this subsection, we consider a special case of problem \eqref{C4} with asymptotically large $Q$, which serves as the performance upper bound of the general case with finite $Q$. The results in Lemma 2 imply that at most $\left| {{{\cal K}_{{\rm{off}}}}} \right|$ IRS beamforming vectors are sufficient for achieving the maximum offloading rate when the number of available IRS beamforming vectors is unlimited. As such, problem \eqref{C1} is equivalently reduced to the following problem
\begin{subequations}\label{C20}
\begin{align}
\label{C20-a}\mathop {\max }\limits_{\left\{ {{\tau _k}} \right\}, {{{\cal K}_{{\rm{off}}}}}, \left\{ {{{\bf{v}}_k}} \right\}}  \;&B\sum\limits_{k \in {{\cal K}_{{\rm{off}}}}} {{\tau _k}{{\log }_2}} \left( {1 \!+\! \frac{{{E_k}{{\left| {{h_{d,k}} \!\!+\!\! {\bf{q}}_k^H{{\bf{v}}_k}} \right|}^2}}}{{{\tau _k}{\sigma ^2}}}} \right) \!+\! \sum\limits_{k \in {{\cal K}_{{\rm{loc}}}}} {\min \left( {\frac{{T{f_{\max }}}}{{{C_k}}},\frac{{{T^{2/3}}{{\left( {\frac{{{E_k}}}{{{\gamma _c}}}} \right)}^{1/3}}}}{{{C_k}}}} \right)} \\
\label{C20-b}{\rm{s.t.}}\;\;\;\;\;\;\;&\sum\nolimits_{k \in {{\cal K}_{{\rm{off}}}}} {{\tau _k} \le T},\\
\label{C20-c}&{\tau _k} \ge 0, ~\forall {k} \in {{\cal K}_{{\rm{off}}}},\\
\label{C20-d}&{{\cal K}_{{\rm{off}}}} \subseteq {\cal K},~{{\cal K}_{{\rm{loc}}}} = {\cal K}\backslash {{\cal K}_{{\rm{off}}}},\\
\label{C20-e}&\left| {{{\left[ {{{\bf{v}}_k}} \right]}_n}} \right| = 1, ~n \in {\cal N}.
\end{align}
\end{subequations}

For problem \eqref{C20}, it is observed that ${{\bf{v}}_k}$'s are only appealed in the objective function and each of them is exclusively separated in its own achievable offloading rate. Therefore, the optimal solution for $\left\{ {{{\bf{v}}_k}} \right\}$ can be obtained by solving $K$ optimization problems independently in parallel. Specifically, for device $k$, $\forall k \in {{\cal K}_{{\rm{off}}}}$, the corresponding problem with respect to ${{{\bf{v}}_k}}$ can be expressed as
\begin{subequations}\label{C9}
\begin{align}
\label{C9-a}\mathop {\max }\limits_{{{{\bf{v}}_k}}}  \;\;&{{{\left| {{h_{d,k}} + {\bf{q}}_k^H{{\bf{v}}_k}} \right|}^2}} \\
\label{C9-b}{\rm{s.t.}}\;\;&\left| {{{\left[ {{{\bf{v}}_k}} \right]}_n}} \right| = 1, ~n \in {\cal N}.
\end{align}
\end{subequations}
The optimal solution for \eqref{C9} can be obtained in closed form expression, which is given by
\begin{align}\label{opt_beamforming}
{\bf{v}}_k^* = \exp \left( { - j\left( {\arg \left( {{\bf{q}}_k^H} \right) - \arg \left( {{h_{d,k}}} \right)} \right)} \right), ~\forall k \in {{\cal K}_{{\rm{off}}}}.
\end{align}
Define ${g_k} = {\left| {{h_{d,k}} + {\bf{q}}_k^H{\bf{v}}_k^*} \right|^2}$ and then problem \eqref{C20} can be simplified as
\begin{subequations}\label{C21}
\begin{align}
\label{C21-a}\mathop {\max }\limits_{\left\{ {{\tau _k}} \right\}, {{{\cal K}_{{\rm{off}}}}}}  \;&B\sum\limits_{k \in {{\cal K}_{{\rm{off}}}}} {{\tau _k}{{\log }_2}} \left( {1 + \frac{{{E_k}{g_k}}}{{{\tau _k}{\sigma ^2}}}} \right) + \sum\limits_{k \in {{\cal K}_{{\rm{loc}}}}} {\min \left( {\frac{{T{f_{\max }}}}{{{C_k}}},\frac{{{T^{2/3}}{{\left( {\frac{{{E_k}}}{{{\gamma _c}}}} \right)}^{1/3}}}}{{{C_k}}}} \right)} \\
\label{C21-b}{\rm{s.t.}}\;\;\;&\sum\nolimits_{k \in {{\cal K}_{{\rm{off}}}}} {{\tau _k} \le T},\\
\label{C21-c}&{\tau _k} \ge 0, ~\forall {k} \in {{\cal K}_{{\rm{off}}}},\\
\label{C21-d}&{{\cal K}_{{\rm{off}}}} \subseteq {\cal K},~{{\cal K}_{{\rm{loc}}}} = {\cal K}\backslash {{\cal K}_{{\rm{off}}}}.
\end{align}
\end{subequations}

It can be verified that objective function \eqref{C21-a} is concave and constraints \eqref{C21-b}-\eqref{C21-c} are convex. The remaining challenge for solving problem \eqref{C21} is how to determine the offloading device set ${{\cal K}_{{\rm{off}}}}$. Fortunately, Theorem 1 explicitly gives the condition on whether a new device should be activated for offloading, which motivates us to propose an efficient successive refinement algorithm to determine ${{\cal K}_{{\rm{off}}}}$ iteratively. Specifically, we first introduce the concept of the trading computation rate as follows
\begin{align}\label{trading_rate}
{\Lambda _k} = BT{\log _2}\left( {1 + \frac{{{E_k}{g_k}}}{{T{\sigma ^2}}}} \right) - \min \left( {\frac{{T{f_{\max }}}}{{{C_k}}},\frac{{{T^{2/3}}{{\left( {{E_k}/{\gamma _c}} \right)}^{1/3}}}}{{{C_k}}}} \right), ~\forall k \in {\cal K},
\end{align}
which represents the difference between the computation rate attained by offloading and local computing for each device. Then, we sort all devices in descending order, i.e., ${\Lambda _{\Phi \left( 1 \right)}} \ge {\Lambda _{\Phi \left( 2 \right)}} \ge  \ldots  \ge {\Lambda _{\Phi \left( K \right)}}$, and set the initial ${{\cal K}_{{\rm{off}}}} = \emptyset$. Then, we successively take one device from the order and decide whether it can be added to ${{\cal K}_{{\rm{off}}}}$ according to Theorem 1. Finally, the device set ${{\cal K}_{{\rm{off}}}^ \star}$ is obtained when one device is found unable to be added to ${{\cal K}_{{\rm{off}}}}$ or all devices are added to ${{\cal K}_{{\rm{off}}}}$. The details of this procedure are summarized in Algorithm 2.
\begin{algorithm}[!t]
 \caption{Successive refinement algorithm for solving problem \eqref{C20}.}
 \label{alg1}
 \begin{algorithmic}[1]
  \STATE  Compute the trading computation rate ${\Lambda _k}$ for $\forall k \in {\cal K}$ by \eqref{trading_rate}.
  \STATE  Sort the trading computation rate of all devices in the descending order, i.e., ${\Lambda _{\Phi \left( 1 \right)}} \ge {\Lambda _{\Phi \left( 2 \right)}} \ge  \ldots  \ge {\Lambda _{\Phi \left( K \right)}}$, and set ${{\cal K}_{{\rm{off}}}} = \emptyset$.
  \STATE \textbf{for} $k = 1:K$
  \STATE  \qquad \textbf{if} inequality \eqref{offloading_condition} is satisfied ~\textbf{do}
  \STATE  \qquad \qquad ${{\cal K}_{{\rm{off}}}} = {{\cal K}_{{\rm{off}}}} \cup \left\{ {\Phi \left( k \right)} \right\}$.
  \STATE \qquad  \textbf{else}
  \STATE \qquad \qquad Stop and output ${{\cal K}_{{\rm{off}}}}$.
  \STATE \qquad\textbf{end if}
  \STATE \textbf{end for}
  \STATE For the obtained ${{\cal K}_{{\rm{off}}}}$, solve problem \eqref{C21}.
 \end{algorithmic}
\end{algorithm}

To understand Algorithm 2 better, we provide the following proposition to characterize its optimal condition.
\begin{proposition}
Algorithm 2 is optimal for problem \eqref{C20} when ${E_1} = {E_2} =  \ldots  = {E_K}$ and ${C_1} = {C_2} =  \ldots  = {C_K}$.
\end{proposition}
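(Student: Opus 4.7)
The plan is to exploit the equal-parameter assumption to reduce Proposition 1 to a greedy-optimality statement: any optimal offloading set must be a \emph{prefix} in the order of decreasing effective channel gain $g_k$, and the stopping rule of Algorithm 2 then picks exactly the right prefix length.

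First I would observe that when $E_k=E$ and $C_k=C$ are common across all $k$, the local computing contribution $\min(Tf_{\max}/C,\,T^{2/3}(E/\gamma_c)^{1/3}/C)$ in the definition of $\Lambda_k$ becomes a constant $R_{{\rm loc}}$ independent of $k$, while $BT\log_2(1+Eg_k/(T\sigma^2))$ is strictly increasing in $g_k$. Hence the descending $\Lambda$-ordering and the descending $g$-ordering coincide, so the permutation $\Phi(\cdot)$ used in Algorithm 2 satisfies $g_{\Phi(1)}\ge g_{\Phi(2)}\ge\cdots\ge g_{\Phi(K)}$.

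Next I would establish by an exchange argument that every optimal offloading set ${\cal K}_{{\rm off}}^\star$ must be a prefix of this order. Suppose, for contradiction, that $\Phi(j)\in {\cal K}_{{\rm off}}^\star$ while $\Phi(i)\notin {\cal K}_{{\rm off}}^\star$ for some $i<j$. Define ${\cal K}'_{{\rm off}}=({\cal K}_{{\rm off}}^\star\setminus\{\Phi(j)\})\cup\{\Phi(i)\}$ and keep the same time allocation, assigning to $\Phi(i)$ the previous slot $\tau_{\Phi(j)}$. Because $g_{\Phi(i)}\ge g_{\Phi(j)}$, the swapped offloading term is no smaller, and because $|{\cal K}_{{\rm loc}}|$ is unchanged with every local rate equal to $R_{{\rm loc}}$, the local contribution is unchanged. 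Re-optimizing times can only raise the objective further, so without loss of generality ${\cal K}_{{\rm off}}^\star$ can be taken to be $\{\Phi(1),\ldots,\Phi(m^\star)\}$ for some $m^\star\in\{0,1,\ldots,K\}$, reducing the problem to identifying the optimal prefix length.

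Finally I would verify that Algorithm 2 selects the correct $m^\star$. By Theorem 1, given the current prefix ${\cal K}_{{\rm off}}=\{\Phi(1),\ldots,\Phi(m)\}$, adding $\Phi(m+1)$ strictly improves the sum computation rate iff the activation condition in Theorem~1 holds. Under the equal-parameter assumption, its LHS simplifies to $BT\log_2\!\bigl(1+Eg_{\Phi(m+1)}/(T\sigma^2+E\sum_{i=1}^m g_{\Phi(i)})\bigr)$, which is monotonically non-increasing in both $m$ (the denominator grows) and in the sorted position $m+1$ (the channel gain $g_{\Phi(m+1)}$ shrinks), while the RHS is the constant $R_{{\rm loc}}$. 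Hence once the activation condition first fails at some index $m+1$, it continues to fail at every subsequent index, so the greedy termination of Algorithm 2 returns precisely $m^\star$, which combined with the closed-form IRS beamformers in (31) and the convex time-allocation problem (35) yields a globally optimal solution of (34).

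The main obstacle is the exchange step: one must carefully verify that the swap preserves feasibility of the time-budget constraint, that re-optimizing $\{\tau_k\}$ after the swap can only weakly increase the objective (which follows from the concavity and monotonicity of $\tau\log_2(1+a/\tau)$ in $\tau$), and that ties in $g_k$ as well as the boundary cases $m^\star\in\{0,K\}$ are handled without creating strict-inequality pathologies.
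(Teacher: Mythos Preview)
Your proof is correct and shares the paper's key observations---that under equal $E_k$ and $C_k$ the $\Lambda$-ordering coincides with the $g$-ordering, and that Theorem~1 governs whether adding a device helps---but you organize them differently. The paper argues directly that Algorithm~2's output ${\cal K}_{{\rm off}}^\star$ is optimal by checking three single-element perturbations (add one device from ${\cal K}\setminus{\cal K}_{{\rm off}}^\star$, remove one device from ${\cal K}_{{\rm off}}^\star$, exchange one device between the two sets), showing each fails to improve the objective, and then asserting that all other sets follow by extension. Your route instead first uses the exchange step structurally, to reduce the search to \emph{prefixes} in the $g$-ordering, and then invokes the monotonicity of Theorem~1's left-hand side in the prefix length to certify that the greedy stopping rule picks the optimal prefix. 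Your decomposition is somewhat cleaner: it makes explicit why single-element perturbations suffice (because any set can be swapped to a prefix of the same size without loss, and prefixes are totally ordered by inclusion), whereas the paper's ``all other cases can be directly extended'' leaves that inductive step implicit. The paper's three-case framing, on the other hand, more directly mirrors the local-optimality intuition and avoids having to spell out the monotonicity of the activation threshold in $m$.
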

\begin{proof}
Please refer to Appendix B.
\end{proof}

Proposition 1 reveals that Algorithm 2 is in capable of achieving the maximum computation rate in a homogeneous MEC scenario, where the available energy and the intensity of computational tasks for all devices are identical, i.e., ${E_1}  =  \ldots  = {E_K}$ and ${C_1}  =  \ldots  = {C_K}$. This can be interpreted as follows. In a homogeneous MEC scenario, the computation rates brought by local computing are the same for all devices. In this case, it can be shown that a device with higher trading computation rate is equivalent to that the device has larger channel power gain for the offloading link. As such, the system tends to activate the $\left| {{{\cal K}_{{\rm{off}}}^ \star}} \right|$ devices for offloading who enjoy the highest channel power gains based on Theorem 1.

\subsection{General Case with Finite $Q$}
In this subsection, we consider the general case with finite value $Q$. Based on the solution derived for the special case with $Q \to \infty $, we provide an efficient approach to reconstruct the solution for the case with finite $Q$.

It is worth noting that if the number of offloading devices obtained by Algorithm 2 is smaller than $Q$, i.e., $\left| {{{\cal K}_{{\rm{off}}}^ \star}} \right| \le Q$, its solution is also applicable to the general case with finite $Q$. However, for the case that $\left| {{{\cal K}_{{\rm{off}}}^ \star}} \right| > Q$, the number of available IRS beamforming vectors may not sufficient for enabling each offloading device to occupy its dedicated IRS beamforming vector. Accordingly, multiple devices may share the same IRS beamforming vector for task offloading. As such, all devices in task offloading mode need to be partitioned into $Q$ disjoint groups, in which the devices belong to the same group employ the common IRS beamforming vector for assisting offloading. Unfortunately, the user grouping method problem is generally intractable since the optimal grouping strategy requires an exhaustive search for all the possible cases, i.e., ${Q^{\left| {{{\cal K}_{{\rm{off}}}}} \right|}}$, which results in an exponential computational complexity. Thus, it may be prohibitive in practice, especially in an overloaded scenario, i.e., $K$ is practically large.

To overcome the aforementioned challenge, we propose an efficient scheme by exploiting the solution for the special case with infinite $Q$. Specifically, we suppose that each device can be occupied on its dedicated IRS beamforming vector as described in the previous subsection and the corresponding channel power gain is ${g_k} = {\left| {{h_{d,k}} + {\bf{q}}_k^H{\bf{v}}_k^*} \right|^2}$. First, all the devices are sorted in descending order according to their trading computation rate defined in \eqref{trading_rate}. We use $\Phi \left( k \right)$ to denote the index for the device of the $k$-th order. Second, each of the first $Q-1$ devices, denoted by $\Phi \left( 1 \right), \ldots \Phi \left( {Q - 1} \right)$, is assigned with its dedicated IRS beamforming vector and the rest $\left| {{{\cal K}_{{\rm{off}}}}} \right| - Q+1$ devices share a common IRS beamforming vector for offloading. Let $\Psi  = \left\{ {\Omega \left( 1 \right), \ldots ,\Omega \left( {Q - 1} \right)} \right\}$. Based on this scheme, problem \eqref{C7} is transformed into following optimization problem for any given ${{{\cal K}_{{\rm{off}}}}}$:
\begin{subequations}\label{C30}
\begin{align}
\label{C30-a}\mathop {\max }\limits_{\left\{ {{\tau _k}} \right\}, {{{\bf{v}}_Q}}}  \;\;&\sum\limits_{q = 1}^{Q - 1} {{\tau _{\Omega \left( q \right)}}{{\log }_2}\left( {1 + \frac{{{E_{\Omega \left( q \right)}}{g_{\Omega \left( q \right)}}}}{{{\tau _{\Omega \left( q \right)}}{\sigma ^2}}}} \right)}  + \sum\limits_{k \in {{\cal K}_{{\rm{off}}}}\backslash \Psi } {{\tau _k}{{\log }_2}\left( {1 + \frac{{{E_k}{{\left| {{h_{d,k}} + {\bf{q}}_k^H{{\bf{v}}_Q}} \right|}^2}}}{{{\tau _k}{\sigma ^2}}}} \right)}\\
\label{C30-b}{\rm{s.t.}}\;\;\;\;&\sum\nolimits_{k \in {{\cal K}_{{\rm{off}}}}} {{\tau _k} \le T},\\
\label{C30-e}&{\tau _k} \ge 0, ~\forall {k} \in {{\cal K}_{{\rm{off}}}},\\
\label{C30-h}&\left| {{{\left[ {{{\bf{v}}_Q}} \right]}_n}} \right| = 1,n \in {\cal N}.
\end{align}
\end{subequations}
According to the results in Theorem 1, problem \eqref{C30} can be equivalently transformed into a more tractable form as follows
\begin{subequations}\label{C31}
\begin{align}
\label{C31-a}\mathop {\max }\limits_{{{{\bf{v}}_Q}}}  \;\;&T{\log _2}\left( {1 + \sum\nolimits_{q = 1}^{Q - 1} {\frac{{{E_{\Omega \left( q \right)}}{g_{\Omega \left( q \right)}}}}{{T{\sigma ^2}}} + \sum\nolimits_{k \in {{\cal K}_{{\rm{off}}}}\backslash \Psi } {\frac{{{E_k}{{\left| {{h_{d,k}} + {\bf{q}}_k^H{{\bf{v}}_Q}} \right|}^2}}}{{T{\sigma ^2}}}} } } \right)\\
\label{C31-b}{\rm{s.t.}}\;\;\;&\eqref{C30-h}.
\end{align}
\end{subequations}
\begin{algorithm}[!t]
 \caption{Successive refinement algorithm for solving problem \eqref{C1}.}
 \label{alg1}
 \begin{algorithmic}[1]
  \STATE  Compute the trading computation rate ${\Lambda _k}$ for $\forall k \in {\cal K}$ by \eqref{trading_rate}.
  \STATE  Sort the trading computation rate of all devices in the descending order, i.e., ${\Lambda _{\Phi \left( 1 \right)}} \ge {\Lambda _{\Phi \left( 2 \right)}} \ge  \ldots  \ge {\Lambda _{\Phi \left( K \right)}}$, and set ${{\cal K}_{{\rm{off}}}} = \emptyset$.
  \STATE \textbf{for} $k = 1:K$
  \STATE  \qquad \textbf{if} $k \le Q-1$ ~\textbf{do}
  \STATE  \qquad \qquad  Update ${{\bf{v}}_k}$ by maximizing ${\left| {{h_{d,\Omega \left( k \right)}} + {\bf{q}}_{\Omega \left( k \right)}^H{{\bf{v}}_{\Omega \left( k \right)}}} \right|^2}$
  \STATE  \qquad \textbf{else}
    \STATE  \qquad \qquad  Update ${{\bf{v}}_Q}$ by solving problem \eqref{C32}
  \STATE \qquad\textbf{end if}
  \STATE  \qquad  \textbf{if} inequality \eqref{offloading_condition} is satisfied ~\textbf{do}
  \STATE  \qquad \qquad ${{\cal K}_{{\rm{off}}}} = {{\cal K}_{{\rm{off}}}} \cup \left\{ {\Phi \left( k \right)} \right\}$.
  \STATE \qquad  \textbf{else}
  \STATE \qquad \qquad Stop and output ${{\cal K}_{{\rm{off}}}}$.
  \STATE \qquad\textbf{end if}
  \STATE \textbf{end for}
  \STATE For the obtained ${{\cal K}_{{\rm{off}}}}$ and $\left\{ {{{\bf{v}}_q}} \right\}$, solve problem \eqref{C21}.
 \end{algorithmic}
\end{algorithm}
One observation for problem \eqref{C31} is that the optimization variable ${{{\bf{v}}_Q}}$ only appeals in the term $\sum\nolimits_{k \in {{\cal K}_{{\rm{off}}}}\backslash \Psi } {{E_k}{{\left| {{h_{d,k}} + {\bf{q}}_k^H{{\bf{v}}_Q}} \right|}^2}/\left( {T{\sigma ^2}} \right)}$. As such, we can focus on the following simplified optimization problem instead:
\begin{subequations}\label{C32}
\begin{align}
\label{C32-a}\mathop {\max }\limits_{{{{\bf{v}}_Q}}}  \;\;&{\sum\nolimits_{k \in {{\cal K}_{{\rm{off}}}}\backslash \Psi } {\frac{{{E_k}{{\left| {{h_{d,k}} + {\bf{q}}_k^H{{\bf{v}}_Q}} \right|}^2}}}{{T{\sigma ^2}}}} }\\
\label{C32-b}{\rm{s.t.}}\;\;\;&\eqref{C30-h}.
\end{align}
\end{subequations}
Although problem \eqref{C32} is non-convex due to the convex objective function and unit-modulus constraints, the convexity of \eqref{C31-a} motivates us to employ the SCA technique for solving it. Specifically, taking the first-order Taylor expansion of ${{{\left| {{h_{d,k}} + {\bf{q}}_k^H{{\bf{v}}_Q}} \right|}^2}}$ at any feasible point ${{{\bf{\hat v}}}_Q}$, we obtain
\begin{align}\label{gain_lower}
{\left| {{h_{d,k}} + {\bf{q}}_k^H{{\bf{v}}_Q}} \right|^2} &\ge  - {\left| {{h_{d,k}} + {\bf{q}}_k^H{{{\bf{\hat v}}}_Q}} \right|^2} + 2{\mathop{\rm Re}\nolimits} \left( {{{\left( {{h_{d,k}} + {\bf{q}}_k^H{{{\bf{\hat v}}}_Q}} \right)}^H}\left( {{h_{d,k}} + {\bf{q}}_k^H{{\bf{v}}_Q}} \right)} \right)\nonumber\\
& = {D_k} + 2{\mathop{\rm Re}\nolimits} \left( {\left( {{\bf{\hat v}}_Q^H{{\bf{q}}_k}{\bf{q}}_k^H + h_{d,k}^H{\bf{q}}_k^H} \right){{\bf{v}}_Q}} \right),
\end{align}
where ${D_k} =  - {\left| {{h_{d,k}} + {\bf{q}}_k^H{{{\bf{\hat v}}}_Q}} \right|^2} + 2{\mathop{\rm Re}\nolimits} \left( {{{\left( {{h_{d,k}} + {\bf{q}}_k^H{{{\bf{\hat v}}}_Q}} \right)}^H}{h_{d,k}}} \right)$. Then, a lower bound of objective function \eqref{C31-a} is given by
\begin{align}\label{sum_gain_lower}
\sum\limits_{k \in {{\cal K}_{{\rm{off}}}}\backslash \Psi } {\frac{{{E_k}{D_k}}}{{T{\sigma ^2}}}}  + 2{\mathop{\rm Re}\nolimits} \left( {\left( {\sum\limits_{k \in {{\cal K}_{{\rm{off}}}}\backslash \Psi } {\frac{{{E_k}\left( {{\bf{\hat v}}_Q^H{{\bf{q}}_k}{\bf{q}}_k^H + h_{d,k}^H{\bf{q}}_k^H} \right)}}{{T{\sigma ^2}}}} } \right){{\bf{v}}_Q}} \right).
\end{align}
The optimal solution satisfying \eqref{C30-h} for maximizing \eqref{sum_gain_lower} can be derived as
\begin{align}\label{v_Q}
{{\bf{v}}_Q} = \exp \left( {j\arg \left( {\sum\limits_{k \in {{\cal K}_{{\rm{off}}}}\backslash \Psi } {\frac{{{E_k}\left( {{{\bf{q}}_k}{\bf{q}}_k^H{{{\bf{\hat v}}}_Q} + {{\bf{q}}_k}{h_{d,k}}} \right)}}{{T{\sigma ^2}}}} } \right)} \right).
\end{align}
Then, problem \eqref{C32} can be efficiently solved by maximizing its lower bound \eqref{sum_gain_lower} in an iterative manner until the convergence is achieved.

Finally, we can extend the successive refinement method for solving original problem \eqref{C1} with finite $Q$. The main steps are similar to those for solving the problem with infinite $Q$. One difference can be summarized as follows. For the first $Q-1$ devices, i.e., $\Psi  = \left\{ {\Omega \left( 1 \right), \ldots ,\Omega \left( {Q - 1} \right)} \right\}$, each device would be assigned with its dedicated IRS beamforming vector. While for the remaining devices belonging to ${\cal K}\backslash \Psi$, they would share the common IRS beamforming vector, i.e., ${{\bf{v}}_Q}$ when adding to ${{\cal K}_{{\rm{off}}}}$. The details of the procedure are summarized in Algorithm 3. The complexity of Algorithm 3 mainly lies in ordering the devices and calculating the shared IRS beamforming vector for problem \eqref{C32}. The overall computational complexity is given by ${\cal O}\left( {K{{\log }_2}\left( K \right) + N{L_{iter}}} \right)$, where $L_{iter}$ denotes the number of iterations required for solving problem \eqref{C32}. Compared to Algorithm 1, its associated computational complexity is significantly reduced, thus making it more appealing for practical systems with large $K$ and $N$.
\vspace{-10pt}
\section{Numerical Results}
In this section, we provide numerical results to validate the effectiveness of the proposed designs and to draw useful insights into IRS-aided MEC systems with binary offloading. We consider a three dimensional coordinate setup, where the AP and the IRS are located at $\left( {0,0,0} \right)$ meter (m) and $\left( {30,0,4} \right)$ m, respectively, and all the devices are uniformly and randomly distributed within a radius of 4 m centered at $\left( {30,0,0} \right)$ m. The distance-dependent path-loss model is given by $L\left( d \right) = {c_0}{\left( {d/{d_0}} \right)^{ - \alpha }}$, where ${c_0} =  - 30$ dB represents the signal attenuation at a reference distance of ${d_0} = 1$ m, $d$ is the link distance, and $\alpha$ is the path-loss exponent. The path-loss exponents for the AP-IRS link, the IRS-device link, and the AP-device link are set to 2.2, 2.2, and 3.4, respectively. We assume that all the links follow Rician fading with a Rician factor of 3 dB. Unless otherwise stated, other system parameters are set as follows: $N = 60$, $T=1$ s, $B = 1$ MHz, ${\sigma ^2} =  - 80$ dBm, ${\gamma _c} = {10^{ - 28}}$, ${E_k} = 10$ dBm, $\forall k$, $Q = 5$, and $C = {C_k} = \left[ {500,1000,1500} \right]$ cycles/s, $\forall k$.
\vspace{-10pt}
\subsection{Impact of Dynamic IRS beamforming}
In Fig. \ref{Dynamic_beamforming}, we study the impact of dynamic IRS beamforming on MEC systems employing binary offloading, by plotting the average sum computation rate and the number of offloading devices versus the number of IRS beamforming vectors available, i.e., $Q$, respectively. For comparison, we consider the following schemes: 1) \textbf{Upper bound}: the succussive refinement method presented in Algorithm 2 for solving the corresponding problem with asymptotically large $Q$; 2) \textbf{Proposed Algorithm 1}: the penalty-based SCA method presented in Algorithm 1 for solving problem \eqref{C1}; 3) \textbf{Proposed Algorithm 3}: the succussive refinement method presented in Algorithm 3 for solving problem \eqref{C1}.
\begin{figure*}[t!]
\centering
\subfigure[Average sum computation rate versus $Q$.]{\label{sumrate_versus_Q}
\includegraphics[width= 2.5in, height=2in]{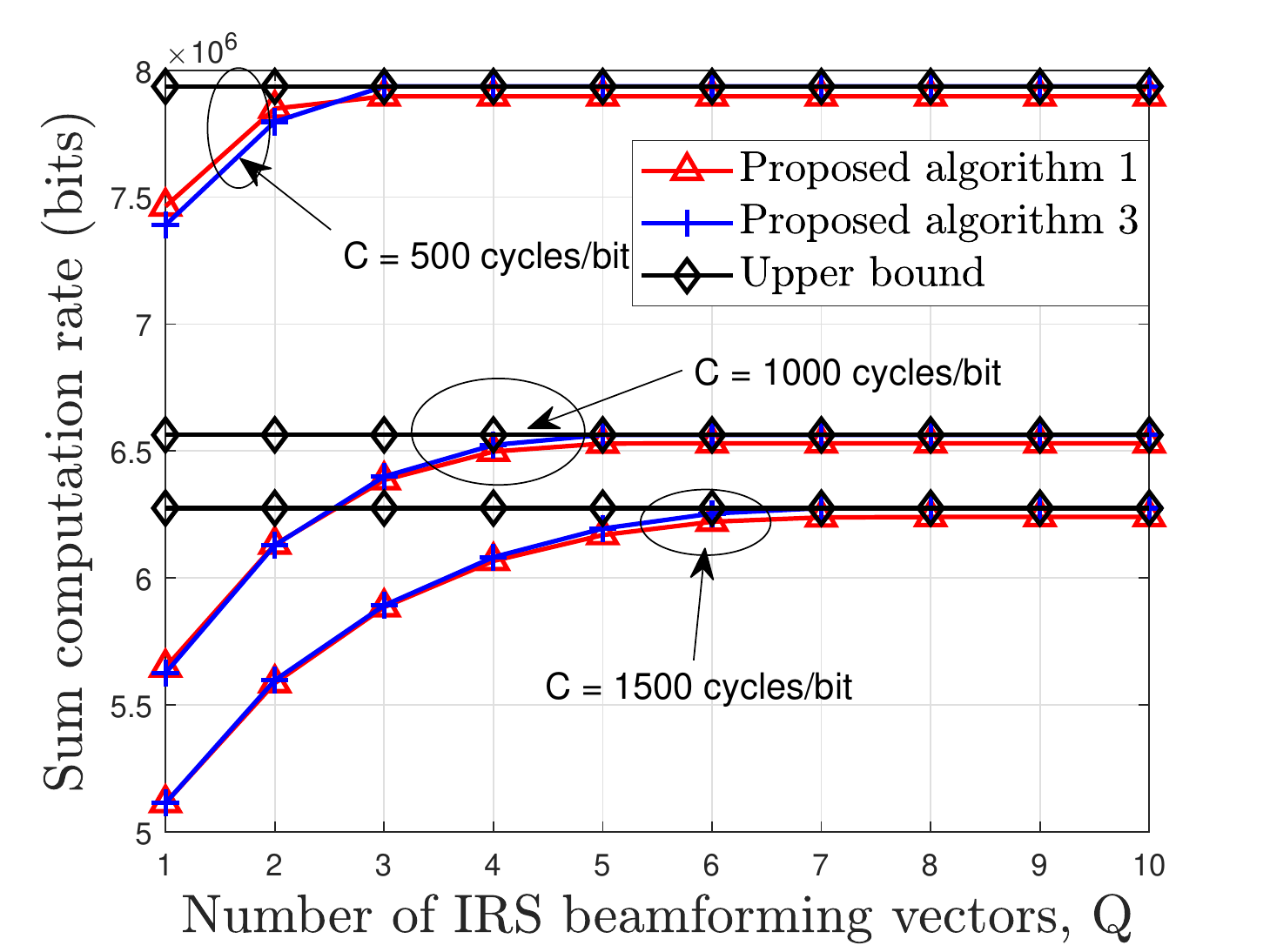}}
\subfigure[Average $\left| {{{\cal K}_{{\rm{off}}}}} \right|$ versus $Q$]{\label{devices_versus_Q}
\includegraphics[width= 2.5in, height=2in]{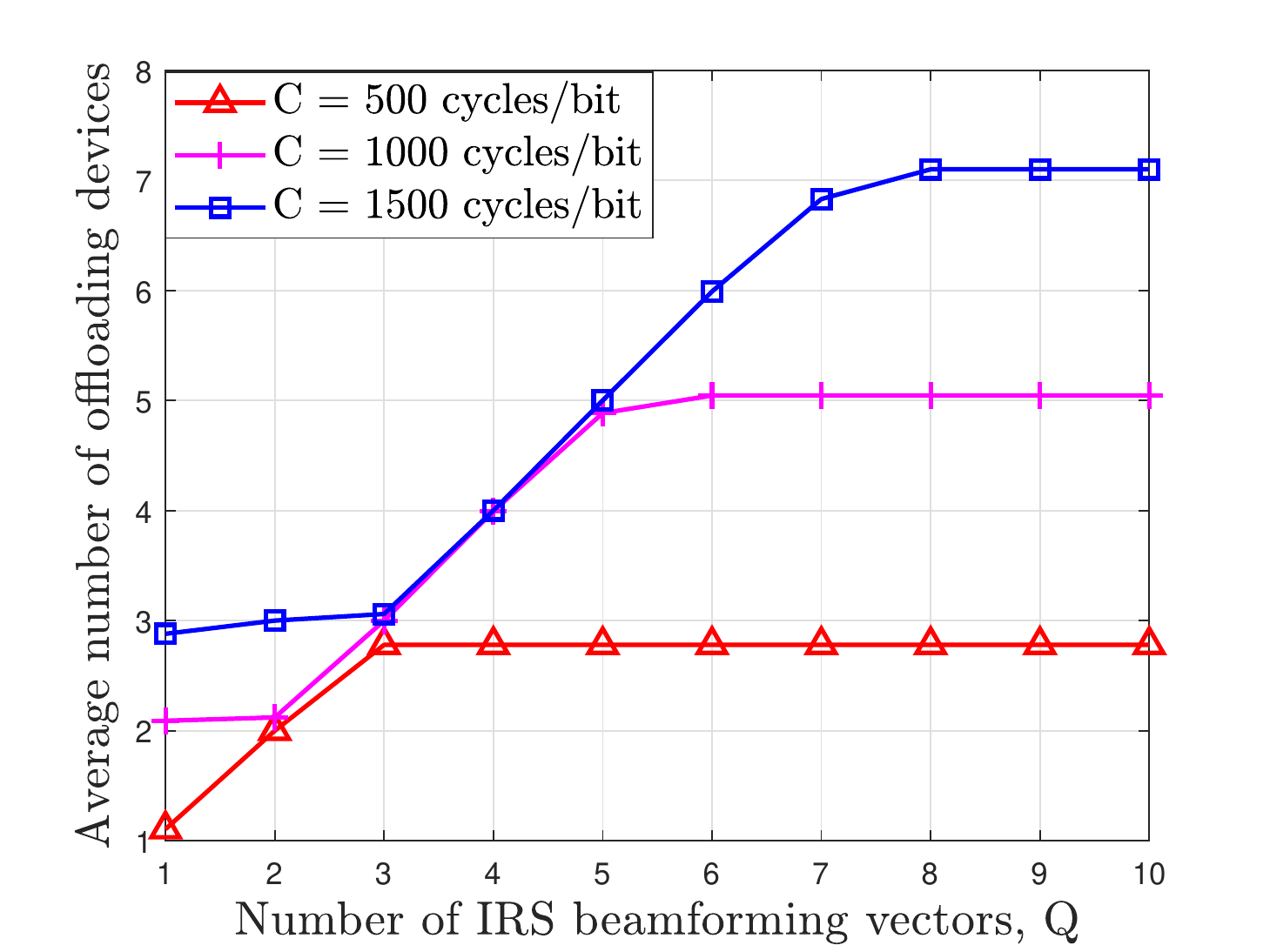}}
\setlength{\abovecaptionskip}{0.4cm}
\caption{{Impact of dynamic IRS beamforming on MEC systems.}}\label{Dynamic_beamforming}
\vspace{-16pt}
\end{figure*}

It is first observed from Fig. \ref{sumrate_versus_Q} that for all cases of $C = 500$ cycles/bit, $C = 1000$ cycles/bit, and $C = 1500$ cycles/bit, the average sum computation rate first increases and then remains constant as $Q$ increases, which indicates that employing dynamic IRS beamforming is indeed beneficial for the computation rate improvement of IRS-aided MEC systems. Second, it can be shown that our proposed Algorithm 3 can achieve the similar performance as the Algorithm 1 and can approach the optimal performance characterized by the upper bound when $Q$ is larger. This validates the effectiveness of our proposed algorithms and also verifies the generality of proposed dynamic IRS beamforming optimization framework.

Although exploiting more IRS beamforming vectors provides the higher design flexibility for achieving the higher computation rate, one can observe that when $Q$ is larger than a certain value, the computation rate remains constant as $Q$ increases. In particular, the required number of IRS beamforming vectors for cases of $C = 500$ cycles/bit, $C = 1000$ cycles/bit, and $C = 1500$ cycles/bit to achieve the maximum computation rate is given by 3, 5, and 7, respectively. It is worth pointing out that the number of IRS phase-shift coefficients required to be feed back from the AP to the IRS controller is $QN$, which increases linearly with $Q$. The result implies a fundamental performance-overhead tradeoff in employing the dynamic IRS beamforming. Thus, the value of $Q$ should be carefully determined based on the system parameters.

Finally, it is observed from Fig. \ref{devices_versus_Q} that as $Q$ increases, the number of offloading devices first gradually increases and then remains constant, which means that more devices can be activated for task offloading with the help of dynamic IRS beamforming. This further demonstrates the effectiveness of dynamic IRS beamforming for unlocking the potential computational capability at edge server, besides its ability for improving the sum computation rate. Moreover, it can be observed that the maximum number of offloading devices for cases of $C = 500$ cycles/bit, $C = 1000$ cycles/bit, and $C = 1500$ cycles/bit is given by 3, 5, and 7, respectively, which is consistent with the required number of IRS beamforming vectors for the corresponding three cases presented in Fig. \ref{sumrate_versus_Q}. The result demonstrates that the number of offloading devices is highly impacted by $C$. As shown in Theorem 1, the device tends to offload when $C$ becomes higher. The reason is that the computation rate attained by local computing is negligible at large value of $C$, which forces more devices to perform task offloading for exploiting computational resources at edge servers.

Note that Algorithm 3 is capable of achieving similar performance as Algorithm 1 and significantly reducing computational complexity, which renders that Algorithm 3 is more appealing for practical systems with large $K$ and $N$. Therefore, the performance evaluation in the next subsection is based on Algorithm 3 if not otherwise specified.
\vspace{-10pt}
\subsection{Performance Comparison}
In order to show the sum computation rate gain brought by the IRS in the MEC system, we compare the following schemes: 1) \textbf{Binary offloading}: problem \eqref{C1} is solved by adopting the successive refinement method presented in Algorithm 3; 2) \textbf{Offloading only}: the IRS beamforming and communications resource allocation are jointly optimized when allowing all the devices to offload their tasks to the AP, i.e., ${{\cal K}_{{\rm{off}}}} = {\cal K}$; 3) \textbf{Random IRS beamforming}: Each phase shift of the IRS is randomly and uniformly distributed over $\left( {0,2\pi } \right]$
while other optimization variables in problem \eqref{C1} are optimized; 4) \textbf{Without IRS}: Binary offloading is considered without adopting IRS; 5) \textbf{Offloading without IRS}: All the devices offload their tasks to the AP without adopting IRS; 6) \textbf{Local computing}: All the devices perform computations locally, i.e., ${{\cal K}_{{\rm{loc}}}} = {\cal K}$.

\begin{figure*}[t!]
\centering
\subfigure[C = 500 cycles/bit.]{\label{sumrate_500}
\includegraphics[width= 2.5in, height=2in]{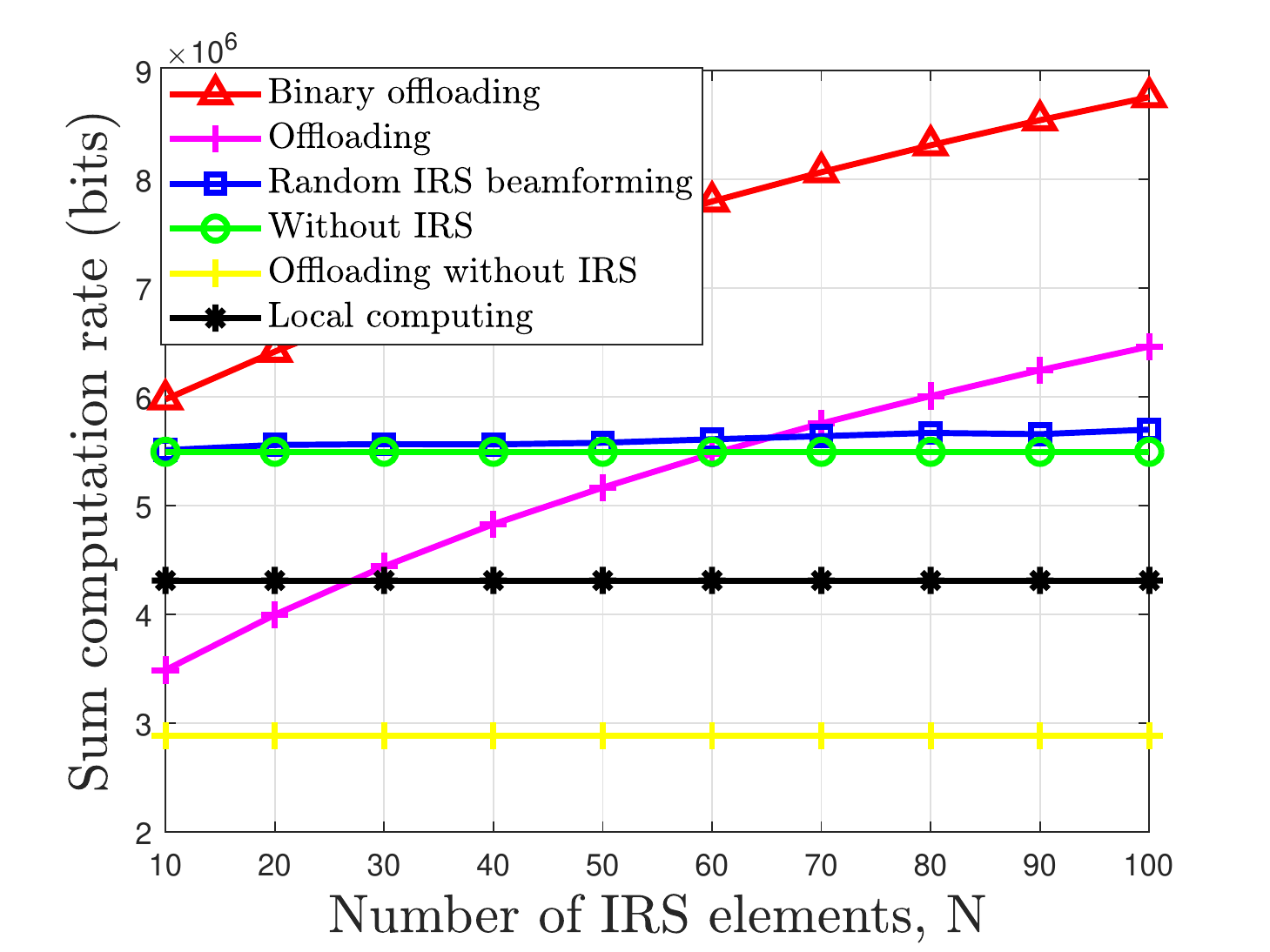}}
\subfigure[C = 1000 cycles/bit.]{\label{sumrate_1000}
\includegraphics[width= 2.5in, height=2in]{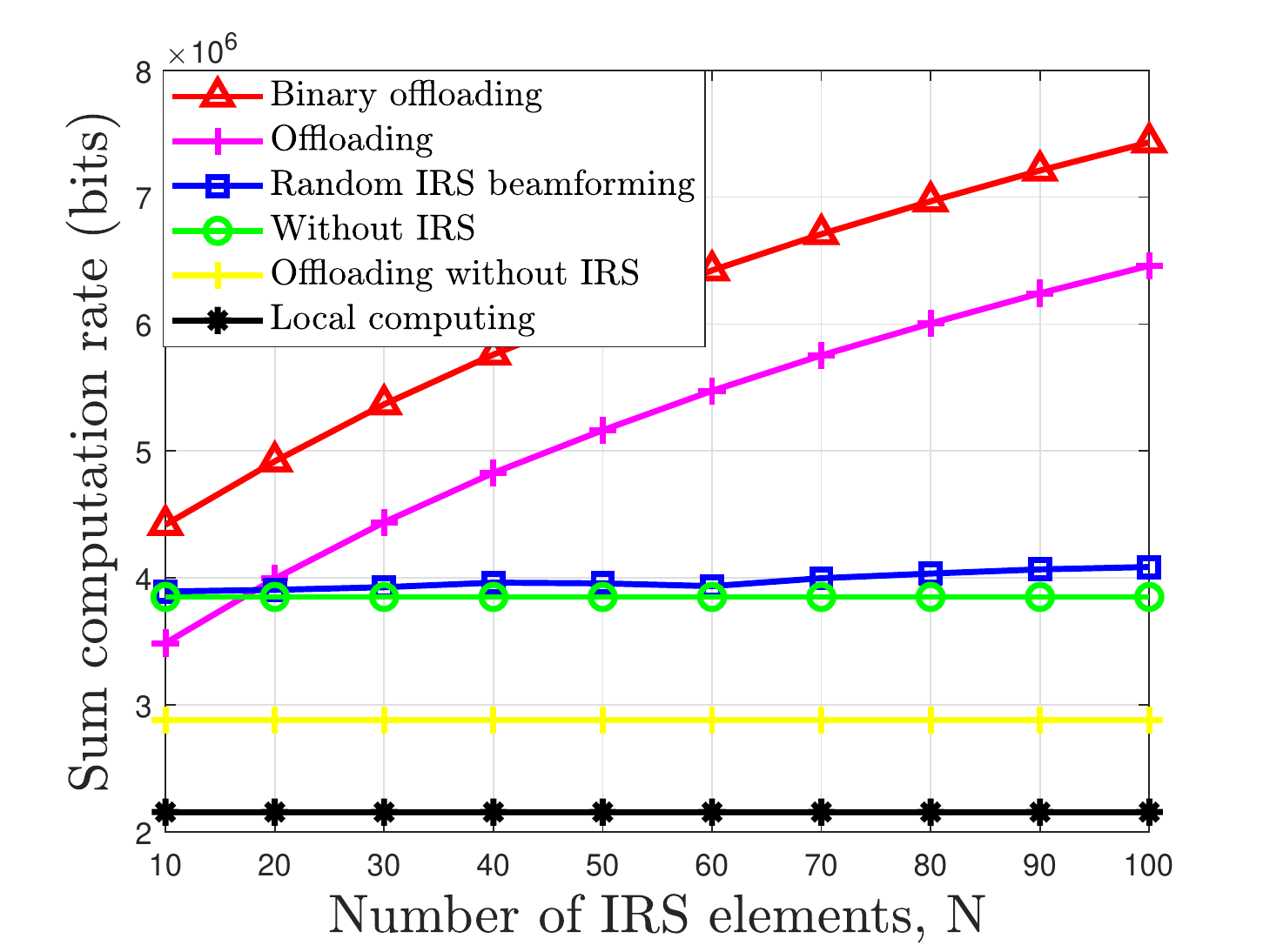}}
\setlength{\abovecaptionskip}{0.4cm}
\caption{{Average sum computation rate versus $N$.}}\label{Rate_versus_N}
\vspace{-16pt}
\end{figure*}

\subsubsection{Impact of Number of IRS Elements}

In Fig. \ref{Rate_versus_N}, we compare the sum computation rate obtained by all the schemes versus $N$ under two cases of $C = 500$ cycles/bit and $C = 1000$ cycles/bit, respectively. For both the two cases, it is observed from Fig. \ref{Rate_versus_N} that the sum computation rate gain achieved by our proposed designs over the benchmark schemes increases as $N$ since more IRS elements provide higher passive beamforming gain for assisting offloading. In addition, the performance of the scheme with random IRS beamforming only attains a marginal gain over the system without IRS, whereas the scheme with offloading only performs even worse than the system with random IRS beamforming, but outperforms it for large $N$. This is expected since the IRS beamforming gain is more dominant as $N$ increases, which is beneficial for compensating the performance loss incurred by computing modes selection. Moreover, one can observe that the scheme of offloading without IRS performs worse than local computing at small $C$. By increasing the number of IRS elements, the performance of offloading only is capable of significantly outperforming local computing, which further demonstrates the usefulness of deploying IRS for unleashing the potential of MEC servers.
\subsubsection{Impact of Distance Between AP and Device Center}
\begin{figure}
\begin{minipage}[t]{0.45\linewidth}
\centering
\includegraphics[width=2.9in]{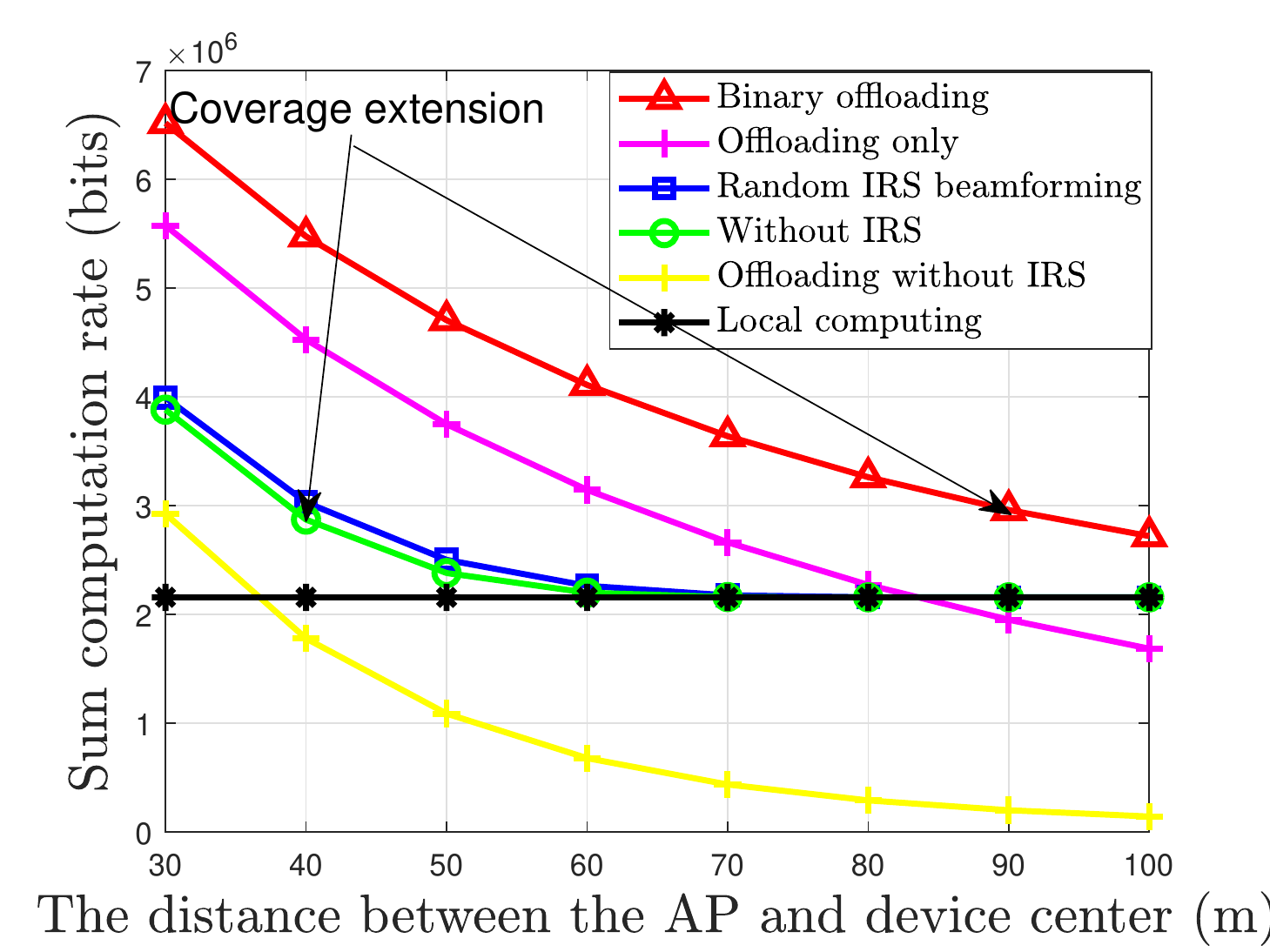}
\caption{Sum computation rate versus the distance between AP and device center with C = 1000 cycles/bit.}
\label{distance}
\end{minipage}%
\hfill
\begin{minipage}[t]{0.45\linewidth}
\centering
\includegraphics[width=2.9in]{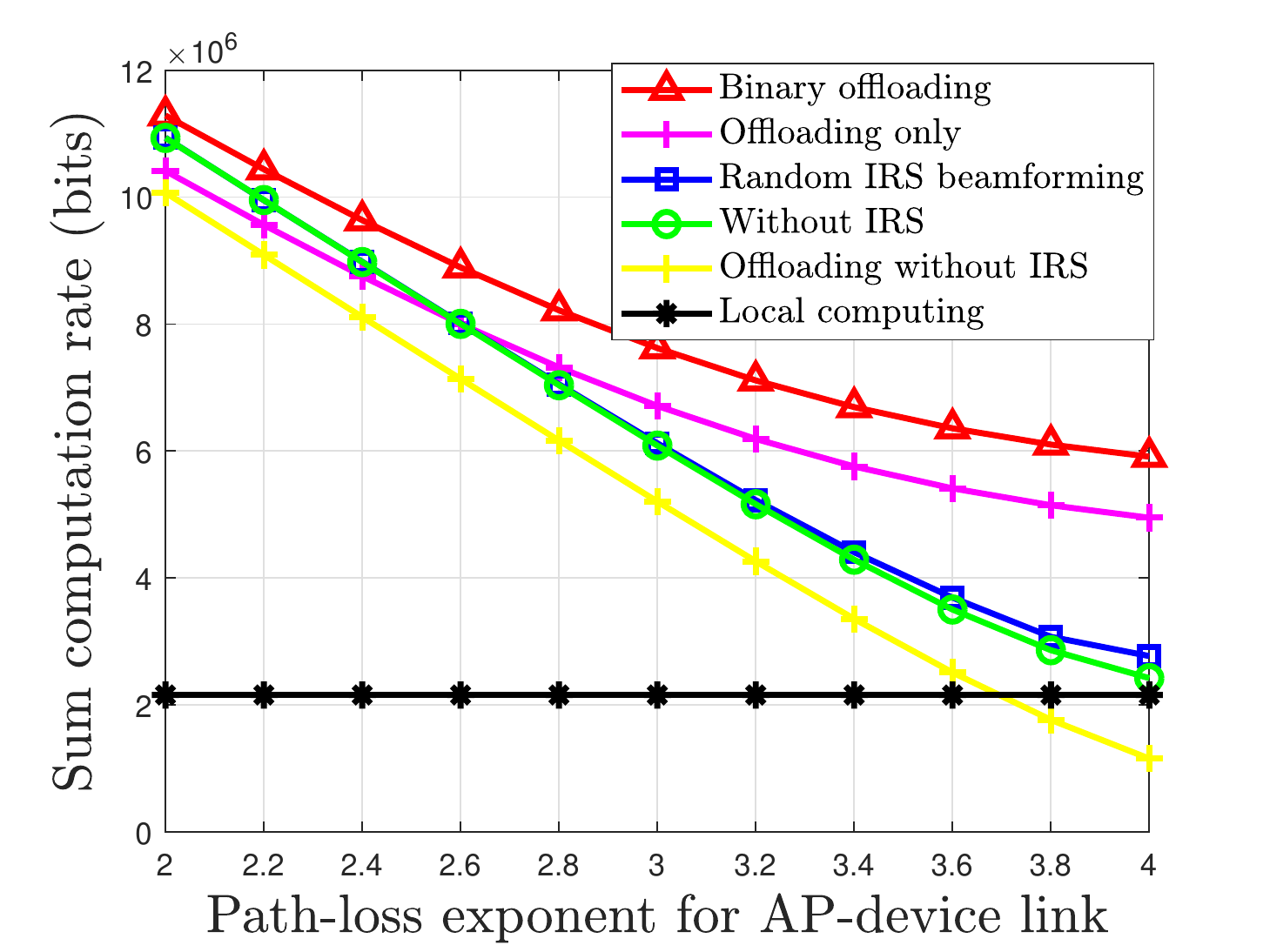}
\caption{Sum computation rate versus the path-loss exponent for the AP-device link with C = 1000 cycles/bit.}
\label{alpha}
\end{minipage}
\vspace{-16pt}
\end{figure}
In Fig. \ref{distance}, we investigate the coverage performance of our considered IRS-aided MEC systems, by plotting the sum computation rate versus the distance between the center of devices cluster and the AP. Note that the IRS also moves accordingly to keep the relative distance with the devices center unchanged. From Fig. \ref{distance}, it is expected that the sum computation rate of all the considered schemes except local computing decreases as the distance increases. Moreover, it is observed that for the schemes without IRS, the computation rate decreases drastically and the gains over local computing vanish rapidly as devices move far away from the AP. This thus fundamentally limits the operating coverage of MEC systems. In contrast, by deploying the IRS in the proximity of devices, the gains of our proposed scheme over local computing can be maintained for a wide range of distance. In other words, for the same available energy at each device, the MEC operating coverage can be extended without compromising the system sum computation rate. For example, for the target sum computational bits about $3 \times {10^6}$, devices with the distance beyond 40 m cannot meet the requirement in the case without IRS, whereas by deploying the IRS in the proximity of devices, the requirement can be still met with the distance of 90 m from the AP. Thus, the offloading coverage increases significantly by deploying the IRS.

\subsubsection{Impact of Path-loss Exponent of the AP-device Link}
In Fig. \ref{alpha}, we plot the system sum computation rate versus the path-loss exponent of the AP-device link. It is observed that the sum computation rate of all schemes except local computing decreases with increasing the path-loss exponent. Moreover, the sum computation rate of MEC systems without IRS is more sensitive to changes of the path-loss exponent, which leads to a faster decrease in the computation rate. By contrast, by deploying the IRS in the proximity of devices, the decrease of the sum computation rate over the path-loss exponent is significantly alleviated. This is mainly because for the systems with IRS, there exists additional DoFs of IRS beamforming optimization for enhancing the reflective link, thereby effectively compensating the performance loss incurred by the higher path-loss exponent. The result demonstrates the effectiveness of deploying IRS for guaranteing the satisfactory offloading efficiency of MEC systems even under the harsh propagation conditions of wireless channels.

\subsubsection{Impact of Available Energy ${E_k}$}
\begin{figure}[t!]
\centering
\includegraphics[width=3in]{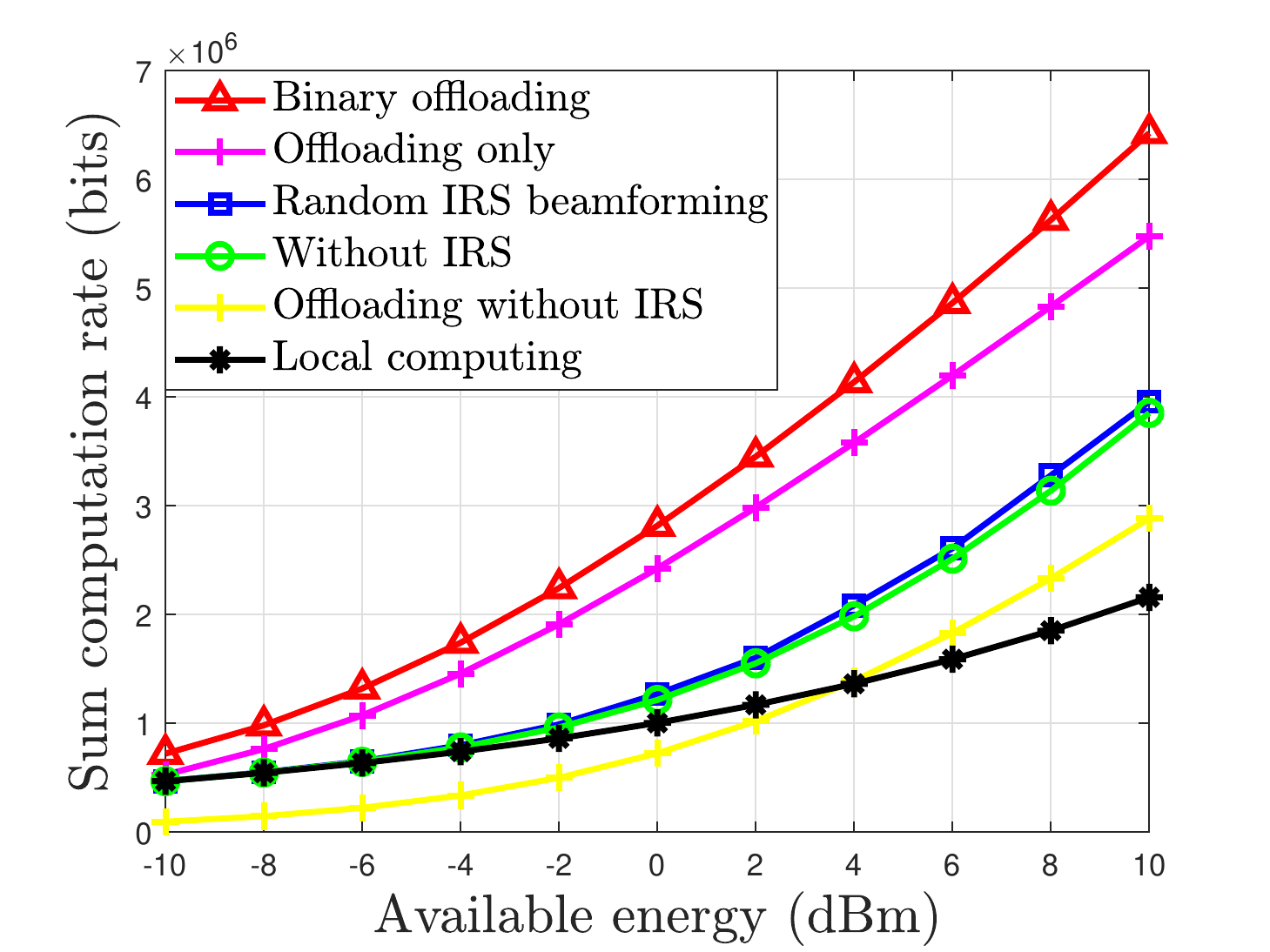}
\caption{{Sum computation rate versus the available energy with C = 1000 cycles/bit.}}
\label{energy}
\vspace{-16pt}
\end{figure}
In Fig. \ref{energy}, we compare the sum computation rate obtained by all schemes versus the available energy at each device. It is observed that our proposed designs can significantly improve the sum computation rate as compared to other benchmarks and the performance gap increases as ${E_k}$ increases, which further demonstrates the benefits of joint optimization of the IRS beamforming and computation mode selection. Moreover, it can be seen that the performance gain of binary  offloading without IRS over local computing is marginal when ${E_k} \le  - 4$ dBm, which indicates that the computational resource at the edge server cannot be fully exploited in the low energy regime of devices. In contrast, by deploying the IRS, the binary offloading scheme significantly outperforms local computing for a wide range of energy available at devices. It implies that the IRS-aided MEC is a promising architecture for supporting multiple energy-limited devices compared to the conventional MEC system.

\section{Conclusion}
This paper developed a unified dynamic IRS framework to maximize the sum computation rate of an MEC system by jointly optimizing the IRS beamforming vectors, computational mode selection, and time allocation. By flexibly controlling the number IRS reconfiguration times, the proposed dynamic IRS beamforming framework is capable of balancing the performance gain and associated signalling overhead. We proposed two algorithms, namely penalty-based SCA algorithm and successive refinement algorithm, to obtain high-quality solutions for the corresponding optimization problem. Simulation results demonstrated the effectiveness of our proposed algorithms, and also illustrated the practical significance of IRS in MEC systems for achieving coverage extension and supporting multiple energy-limited devices for task offloading. Moreover, the maximum number of required
IRS beamforming vectors for achieving the maximum computation rate performance was also revealed, which suggests the fundamental performance-cost tradeoff in exploiting the dynamic IRS beamforming framework.

\section*{Appendix A: \textsc{Proof of Theorem 1}}
For any given the device sets ${{\cal K}_{{\rm{off}}}}$ and ${{\cal K}_{{\rm{loc}}}}$, problem \eqref{C7} is reduced to the following optimization problem
\begin{subequations}\label{C8}
\begin{align}
\label{C8-a}\mathop {\max }\limits_{\left\{ {{\tau _k}} \right\}}  \;\;&B\sum\limits_{k \in {{\cal K}_{{\rm{off}}}}} {{\tau _k}{{\log }_2}} \left( {1 + \frac{{{E_k}{{\left| {{h_{d,k}} + {\bf{q}}_k^H{\bf{v}}_{\Omega \left( k \right)}^*} \right|}^2}}}{{{\tau _k}{\sigma ^2}}}} \right)\\
\label{C8-b}{\rm{s.t.}}\;\;&\sum\nolimits_{k \in {{\cal K}_{{\rm{off}}}}} {{\tau _k} \le T},\\
\label{C8-c}&{\tau _k} \ge 0, ~\forall {k} \in {{\cal K}_{{\rm{off}}}}.
\end{align}
\end{subequations}
It can be readily verified that problem \eqref{C8} is a convex optimization problem since the objective function \eqref{C8-a} is concave and constraints \eqref{C8-b}-\eqref{C8-c} are convex. As such, we exploit the KKT conditions to characterize the optimal value of problem \eqref{C8}. The partial Lagrangian function of problem \eqref{C8} is given by
\begin{align}\label{Lagrangian_function}
{\cal L}\left( {{\tau _k},\mu } \right) = B\sum\limits_{k \in {{\cal K}_{{\rm{off}}}}} {{\tau _k}{{\log }_2}} \left( {1 + \frac{{{E_k}{{\left| {{h_{d,k}} + {\bf{q}}_k^H{\bf{v}}_{\Omega \left( k \right)}^*} \right|}^2}}}{{{\tau _k}{\sigma ^2}}}} \right) + \mu \left( {T - \sum\limits_{k \in {{\cal K}_{{\rm{off}}}}} {{\tau _k}} } \right),
\end{align}
where $\mu  \ge 0$ denotes the dual variable associated with constraint \eqref{C8-b}. According to KKT conditions, we have
\begin{align}\label{KKT_condition}
\frac{{\partial {\cal L}\left( {{\tau _k},\mu } \right)}}{{\partial {\tau _k}}} = \Upsilon \left( {{\gamma _k}} \right) \buildrel \Delta \over = B{\log _2}\left( {1 + {\gamma _k}} \right) - B\frac{{{\gamma _k}}}{{\left( {1 + {\gamma _k}} \right)\ln 2}} - \mu  = 0, ~\forall k \in {{\cal K}_{{\rm{off}}}},
\end{align}
where
\begin{align}\label{gamma_K}
{\gamma _k} = \frac{{{E_k}{{\left| {{h_{d,k}} + {\bf{q}}_k^H{\bf{v}}_{\Omega \left( k \right)}^*} \right|}^2}}}{{{\tau _k}{\sigma ^2}}}, ~\forall k \in {{\cal K}_{{\rm{off}}}},
\end{align}
is the received signal-to-noise ratio (SNR) of offloading device $k$ at the AP. By taking the first order of derivative of $\Upsilon \left( {{\gamma _k}} \right)$ with respect to ${{\gamma _k}}$, it can be verified that $\Upsilon \left( {{\gamma _k}} \right)$ is an increasing function. Since $\Upsilon \left( 0 \right) =  - \mu  \le 0$, the equation $\Upsilon \left( {{\gamma _k}} \right) = 0$ has a unique solution denoted by ${\gamma ^*}$, which can be obtained by bisection search. As such, all the devices operating in offloading mode share the same SNR, i.e,
\begin{align}\label{SNR_relation}
\frac{{{E_k}{{\left| {{h_{d,k}} + {\bf{q}}_k^H{\bf{v}}_{\Omega \left( k \right)}^*} \right|}^2}}}{{\tau _k^*{\sigma ^2}}} = \frac{{{E_j}{{\left| {{h_{d,j}} + {\bf{q}}_j^H{\bf{v}}_{\Omega \left( j \right)}^*} \right|}^2}}}{{\tau _j^*{\sigma ^2}}},\forall k,j \in {{\cal K}_{{\rm{off}}}}.
\end{align}
Based on \eqref{SNR_relation}, the optimal value of problem \eqref{C7} can be written in a closed-form expression as
\begin{align}\label{optimal_value}
R_{{\rm{off}}}^*\left( {{{\cal K}_{{\rm{off}}}}} \right) &= B\sum\limits_{k \in {{\cal K}_{{\rm{off}}}}} {\tau _k^*{{\log }_2}} \left( {1 + \frac{{{E_k}{{\left| {{h_{d,k}} + {\bf{q}}_k^H{\bf{v}}_{\Omega \left( k \right)}^*} \right|}^2}}}{{\tau _k^*{\sigma ^2}}}} \right)\nonumber\\
& = B\sum\limits_{k \in {{\cal K}_{{\rm{off}}}}} {\tau _k^*{{\log }_2}} \left( {1 + \frac{{\sum\nolimits_{k \in {{\cal K}_{{\rm{off}}}}} {{E_k}{{\left| {{h_{d,k}} + {\bf{q}}_k^H{\bf{v}}_{\Omega \left( k \right)}^*} \right|}^2}} }}{{\sum\nolimits_{k \in {{\cal K}_{{\rm{off}}}}} {\tau _k^*{\sigma ^2}} }}} \right)\nonumber\\
&\mathop  = \limits^{\left( a \right)} BT{\log _2}\left( {1 + \frac{{\sum\nolimits_{k \in {{\cal K}_{{\rm{off}}}}} {{E_k}{{\left| {{h_{d,k}} + {\bf{q}}_k^H{\bf{v}}_{\Omega \left( k \right)}^*} \right|}^2}} }}{{T{\sigma ^2}}}} \right),
\end{align}
where (a) follows that $T = \sum\nolimits_{k \in {{\cal K}_{{\rm{off}}}}} {\tau _k^*}$. By adding a device $k' \in {{\cal K}_{{\rm{loc}}}}$ to ${{{\cal K}_{{\rm{off}}}}}$, we have
\begin{align}\label{optimal_value2}
&R_{{\rm{off}}}^*\left( {{{\cal K}_{{\rm{off}}}} \cup \left\{ {k'} \right\}} \right)\nonumber\\
&= BT{\log _2}\left( {1 + \frac{{{E_{k'}}{{\left| {{h_{d,k'}} + {\bf{q}}_{k'}^H{\bf{v}}_{\Omega \left( {k'} \right)}^*} \right|}^2} + \sum\nolimits_{k \in {{\cal K}_{{\rm{off}}}}} {{E_k}{{\left| {{h_{d,k}} + {\bf{q}}_k^H{\bf{v}}_{\Omega \left( k \right)}^*} \right|}^2}} }}{{T{\sigma ^2}}}} \right).
\end{align}
It can be easily verified that when
\begin{align}\label{condition2}
R_{{\rm{off}}}^*\left( {{{\cal K}_{{\rm{off}}}} \cup \left\{ {k'} \right\}} \right) - R_{{\rm{off}}}^*\left( {{{\cal K}_{{\rm{off}}}}} \right) \ge \min \left( {\frac{{T{f_{\max }}}}{{{C_{k'}}}},\frac{{{T^{2/3}}}}{{{C_{k'}}}}{{\left( {\frac{{{E_{k'}}}}{{{\gamma _c}}}} \right)}^{1/3}}} \right),
\end{align}
activating device $k'$ for task offloading can achiever higher computation rate, which thus completes the proof.
\section*{Appendix A: \textsc{Proof of roposition 1}}
Without loss of generality, we suppose that the offloading device set obtained by Algorithm 2 is denoted by ${{\cal K}_{{\rm{off}}}^ \star} = \left\{ {\Phi \left( 1 \right),\Phi \left( 2 \right), \ldots ,\Phi \left( {{K_{{\rm{off}}}}} \right)} \right\}$. When ${E_1} = {E_2} =  \ldots  = {E_K}$ and ${C_1} = {C_2} =  \ldots  = {C_K}$, it follows that ${g_{\Phi \left( 1 \right)}} \ge  \ldots  \ge {g_{\Phi \left( {{K_{{\rm{off}}}}} \right)}} \ge {g_{\Phi \left( {{K_{{\rm{off}}}} + 1} \right)}} \ge  \ldots  \ge {g_{\Phi \left( K \right)}}$. For notational simplicity, we use ${R^*}\left( {{{\cal K}_{{\rm{off}}}}} \right)$ to represent the maximum computation rate based on the set ${{{\cal K}_{{\rm{off}}}}}$. Then, we only need to prove that the computation rate achieved based on ${{\cal K}_{{\rm{off}}}^ \star}$ is higher than that of any other set. Specifically, we consider the following three cases. and all other cases can be directly extended from the study of these three cases.

1) \textit{Case 1}: By adding an element from the set ${\cal K}\backslash {{\cal K}_{{\rm{off}}}^ \star}$, we obtain a new offloading device set denoted by ${{\mathord{\buildrel{\lower3pt\hbox{$\scriptscriptstyle\frown$}}
\over {\cal K}} }_{{\rm{off}}}} = \left\{ {\Phi \left( 1 \right), \ldots ,\Phi \left( {{K_{{\rm{off}}}}} \right),\Phi \left( j \right)} \right\}$, where $\Phi \left( j \right) \in {\cal K}\backslash {{\cal K}_{{\rm{off}}}^ \star}$. It follows that ${g_{\Phi \left( {{K_{{\rm{off}}}}} \right)}} \ge {g_{\Phi \left( {{K_{{\rm{off}}}} + 1} \right)}} \ge {g_{\Phi \left( j \right)}}$. Based on Theorem 1, we have ${R^*}\left( {{{\cal K}_{{\rm{off}}}^ \star} \cup \left\{ {\Phi \left( {{K_{{\rm{off}}}} + 1} \right)} \right\}} \right) \le {R^*}\left( {{{\cal K}_{{\rm{off}}}^ \star}} \right)$. Since ${g_{\Phi \left( {{K_{{\rm{off}}}} + 1} \right)}} \ge {g_{\Phi \left( j \right)}}$, ${R^*}\left( {{{\cal K}_{{\rm{off}}}^ \star} \cup \left\{ {\Phi \left( {{K_{{\rm{off}}}} + 1} \right)} \right\}} \right) \le {R^*}\left( {{{\cal K}_{{\rm{off}}}} \cup \left\{ {\Phi \left( {{K_{{\rm{off}}}} + 1} \right)} \right\}} \right)$. As such, we can obtain ${R^*}\left( {{{\mathord{\buildrel{\lower3pt\hbox{$\scriptscriptstyle\frown$}}
\over {\cal K}} }_{{\rm{off}}}}} \right) \le {R^*}\left( {{{\cal K}_{{\rm{off}}}^ \star}} \right)$, which indicates that adding an arbitrary device from the set ${\cal K}/{{\cal K}_{{\rm{off}}}}$ to ${{\cal K}_{{\rm{off}}}^ \star}$ would result in lower computation rate.

2) \textit{Case 2}: By removing an arbitrary element ${\Phi \left( i \right)}$ from ${{\cal K}_{{\rm{off}}}^ \star}$, we obtain a new offloading device set denoted by ${{\mathord{\buildrel{\lower3pt\hbox{$\scriptscriptstyle\smile$}}
\over {\cal K}} }_{{\rm{off}}}} = \left\{ {\Phi \left( 1 \right), \ldots ,\Phi \left( {{K_{{\rm{off}}}}} \right)} \right\}\backslash \left\{ {\Phi \left( i \right)} \right\}$. Generally, it follows that ${g_{\Phi \left( {{K_{{\rm{off}}}}} \right)}} \le {g_{\Phi \left( i \right)}}$. According to Theorem 1, it can be easily shown that ${R^*}\left( {{{\cal K}_{{\rm{off}}}^ \star}\backslash \left\{ {\Phi \left( {{K_{{\rm{off}}}}} \right)} \right\}} \right) \le {R^*}\left( {{{\cal K}_{{\rm{off}}}^ \star}} \right)$. Since ${g_{\Phi \left( {{K_{{\rm{off}}}}} \right)}} \le {g_{\Phi \left( i \right)}}$, we have ${R^*}\left( {{{\mathord{\buildrel{\lower3pt\hbox{$\scriptscriptstyle\smile$}}
\over {\cal K}} }_{{\rm{off}}}}} \right) \le {R^*}\left( {{{\cal K}_{{\rm{off}}}}\backslash \left\{ {\Phi \left( {{K_{{\rm{off}}}}} \right)} \right\}} \right)$. As such, we can obtain ${R^*}\left( {{{\mathord{\buildrel{\lower3pt\hbox{$\scriptscriptstyle\smile$}}
\over {\cal K}} }_{{\rm{off}}}}} \right) \le {R^*}\left( {{{\cal K}_{{\rm{off}}}^ \star}} \right)$, which indicates that removing an arbitrary device from ${{\cal K}_{{\rm{off}}}^ \star}$ would result in lower computation rate.

3) \textit{Case 3}: By exchanging an element between ${\cal K}\backslash {{\cal K}_{{\rm{off}}}^ \star}$ and ${{\cal K}_{{\rm{off}}}^ \star}$, we obtain a new offloading device set denoted by ${{\tilde {\cal K}}_{{\rm{off}}}} = \left\{ {\Phi \left( 1 \right), \ldots ,\Phi \left( {{K_{{\rm{off}}}}} \right),\Phi \left( j \right)} \right\}\backslash \left\{ {\Phi \left( i \right)} \right\}$, where ${\Phi \left( i \right)} \in {\cal K}_{{\rm{off}}}^ \star$ and $\Phi \left( j \right) \in {\cal K}\backslash {{\cal K}_{{\rm{off}}}^ \star}$. It follows that ${g_{\Phi \left( j \right)}} \le {g_{\Phi \left( i \right)}}$. As such, it can be  easily obtained ${R^*}\left( {{{\tilde {\cal K}}_{{\rm{off}}}}} \right) \le {R^*}\left( {{{\cal K}_{{\rm{off}}}^ \star}} \right)$, which indicates that the maximum computation rate based on ${{\tilde {\cal K}}_{{\rm{off}}}}$ is lower than that based on ${{\cal K}_{{\rm{off}}}^ \star}$.

Note that all other cases can be directly extended from the study of the above three cases. Thus, we complete the proof.

\bibliographystyle{IEEEtran}
\bibliography{IEEEabrv,D:/MEC_binary/myref}


\end{document}